\newcommand{\E}{\mathbb{E}}
\renewcommand{\P}{\textrm{P}}
\newcommand{\indicator}[1]{\mathbf{1}\{#1\}}
\newcommand{\N}{\mathcal{N}}
\newcommand{\bmd}{\bm{\Delta}}
\newcommand{\citet}{\textcite}
\newtheorem{assumption}{Assumption}
\newtheorem{proposition}{Proposition}
\newtheorem{lemma}{Lemma}
\newtheorem{theorem}{Theorem}
\newtheorem{corollary}{Corollary}
\newenvironment{namedassumption}[1]
  {\inneruassumption}
  {\endinneruassumption}
\crefname{assumption}{Assumption}{Assumptions}
\crefname{lemma}{Lemma}{Lemmas}
\crefname{inneruassumption}{Assumption}{Assumptions}
\theoremstyle{definition}
\newtheorem{remark}{Remark}
    \def\independenT#1#2{\mathrel{\setbox0\hbox{$#1#2$}%
    \copy0\kern-\wd0\mkern4mu\box0}}
\title{\textbf{Treatment Effects in Staggered Adoption Designs with Non-Parallel Trends\footnote{Code for the approach proposed in the current paper is available in the \texttt{staggered\_ife2} function from the \texttt{ife R} package which can be downloaded from \url{https://github.com/bcallaway11/ife}.}}}
\author{Brantly Callaway\footnote{Department of Economics, University of Georgia.  Email: \href{mailto:brantly.callaway@uga.edu}{brantly.callaway@uga.edu}} \and Emmanuel Selorm Tsyawo\footnote{FGSES, Universit\'e Mohammed VI Polytechnique. Email: \href{mailto:emmanuel.tsyawo@um6p.ma}{emmanuel.tsyawo@um6p.ma}}}
\begin{document}

\maketitle

\abstract{\noindent This paper considers identifying and estimating causal effect parameters in a staggered treatment adoption setting --- that is, where a researcher has access to panel data and treatment timing varies across units.  We consider the case where untreated potential outcomes may follow non-parallel trends over time across groups.  This implies that the identifying assumptions of leading approaches such as difference-in-differences do not hold.  We mainly focus on the case where untreated potential outcomes are generated by an interactive fixed effects model and show that variation in treatment timing provides additional moment conditions that can be used to recover a large class of target causal effect parameters.  Our approach exploits the variation in treatment timing without requiring either (i) a large number of time periods or (ii) requiring any extra exclusion restrictions. This is in contrast to essentially all of the literature on interactive fixed effects models which requires at least one of these extra conditions.  Rather, our approach directly applies in settings where there is variation in treatment timing.  Although our main focus is on a model with interactive fixed effects, our idea of using variation in treatment timing to recover causal effect parameters is quite general and could be adapted to other settings with non-parallel trends across groups such as dynamic panel data models.}

\bigskip

\bigskip

\bigskip

\bigskip

\bigskip

\noindent \textbf{JEL Codes:} C14, C21, C23

\bigskip

\noindent \textbf{Keywords:} Treatment Effects, Panel Data, Interactive Fixed Effects, Difference-in-Differences, Treatment Effect Heterogeneity

\vspace{200pt}

\normalsize

\onehalfspacing

\pagebreak

\section{Introduction}

Exploiting access to panel data is one of the most common, if not \textit{the} most common, approach for researchers hoping to learn about the causal effect of economic policies (or some other treatment) on some outcome of interest.  In a setting with panel data, economists have traditionally been most interested in models that include time-invariant unobserved heterogeneity that may be correlated with the treatment variable rather than, say, models that are primarily driven by lagged outcomes.  A main reason for this is that economic theory often suggests models that include variables that may not be observed by the researcher.  Following a large, recent literature on panel data approaches to causal inference that are robust to treatment effect heterogeneity, our starting point is a model for untreated potential outcomes such as
\begin{align} \label{eqn:untreated-potential-outcomes-general}
    Y_{it}(0) = h_t(\xi_i, e_{it})
\end{align}
where $Y_{it}(0)$ is unit $i$'s untreated potential outcome in time period $t$ (i.e., the outcome that unit $i$ would experience in period $t$ if it did not participate in the treatment), $h_t$ is some (unknown) non-parametric function that can change over time, $\xi_i$ is unit-specific time-invariant unobserved heterogeneity (that may be distributed differently for the treated group relative to the untreated group and is not necessarily restricted to be scalar), and $e_{it}$ are time-varying unobservables; for simplicity, we abstract from observed covariates.  This is obviously a challenging model to make much progress with (even under additional conditions on the time-varying unobservables such as being independent of the treatment and $\xi_i$).  Moreover, in a large fraction of applications in economics, researchers only have access to a few periods of data with which to estimate a model or recover some target parameters --- this leads to substantial drawbacks for estimation strategies that rely on estimating $\xi_i$ for each unit due to the incidental parameters problem.  Thus, it is very common to significantly simplify this model to the following 
\begin{align} \label{eqn:untreated-potential-outcomes-twfe}
    Y_{it}(0) = \theta_t + \xi_i + e_{it}
\end{align}
In this setting, if the distribution of $e_{it}$ is the same across groups, then it is straightforward to difference out the unit fixed effects, $\xi_i$, recover $\theta_t$ (given a large number of cross-sectional units), and hence to recover average treatment effect parameters.  In fact, this is exactly the sort of model that leads to difference-in-differences identification strategies which are the dominant approach to causal inference with short panels in economics.\footnote{To be clear, this is not the only viable approach here.  See, for example, \citet{athey-imbens-2006,chernozhukov-val-hahn-newey-2013} for alternative approaches that can work with short panels though, like difference-in-differences, these approaches require additional auxiliary assumptions relative to the model in \Cref{eqn:untreated-potential-outcomes-general}.}

However, we emphasize that, even if many theoretical arguments lead to models such as the one in \Cref{eqn:untreated-potential-outcomes-general}, the extra linearity condition that leads to \Cref{eqn:untreated-potential-outcomes-twfe} is often not implied by economic theory --- despite it being a key requirement for the identification strategy to work.  This is an important distinction relative to using a linear projection to estimate a possibly nonlinear conditional expectation where the conditioning variables are observed.  In this case, the linear projection model has certain good properties (such as being the best linear approximation to conditional expectation) but, unlike \Cref{eqn:untreated-potential-outcomes-twfe}, linearity does not serve an important role as an identification assumption. 

In the current paper, we instead consider the following model for untreated potential outcomes
\begin{align} \label{eqn:untreated-potential-outcomes-ife}
    Y_{it}(0) = \theta_t + \eta_i + \lambda_i'F_t + e_{it}
\end{align}
where we have split the time-invariant unobserved heterogeneity, $\xi_i$, into two components, $\eta_i$ and $\lambda_i$, and allow for the effect of some of the components of the unobserved heterogeneity to vary over time.  This is an interactive fixed-effects model for untreated potential outcomes. Viewed together, $\lambda_i'F_t$ can vary across units and periods, thus %
notably weakening the parallel trends assumption that results from the simpler model in \Cref{eqn:untreated-potential-outcomes-twfe}.  %
Moreover, there are a number of features of this model that are attractive in the context of policy evaluation. For example, \citet{callaway-karami-2023} argue that this model (rather than, say, a unit-specific linear trends model) arises naturally in a setting where a researcher believes that parallel trends holds after conditioning on some other variables, but those variables are not observed by the researcher.\footnote{To give a more specific example, suppose that a researcher was interested in the effect of some treatment on a person's income.  Further, suppose that the researcher thinks that parallel trends holds after conditioning on a person's ability.  If that researcher were using data from the National Longitudinal Survey, then there are measures of a person's ability (such as the person's score on the Armed Forces Qualification Test (AFQT)) that the researcher could use.  On the other hand, if the researcher were using data from the Current Population Survey, there are no obvious measures of ability that are available.  Thus, in the first case, the researcher could just directly include AFQT score in the first case in place of $\lambda_i$ in \Cref{eqn:untreated-potential-outcomes-ife} and rely on a version of conditional parallel trends similar to the case considered in \citet{heckman-ichimura-smith-todd-1998}; in the second case, however, $\lambda_i$ would be unobserved and that would lead to the type of interactive fixed effects model that we consider in the current paper.  Notice that, by contrast, this discussion does not lead to linear trends models (where $\lambda_i'F_t$ is replaced by $\lambda_i t$) as, if ability were observed, it seems very unlikely that any researcher would include it in the model attached to the linear trend term $t$.}

In the current paper, we propose an identification strategy to recover causal effect parameters when (i) untreated potential outcomes are generated by an interactive fixed effects model as in \Cref{eqn:untreated-potential-outcomes-ife} and (ii) in a setting where there is variation in treatment timing across units (often referred to as \textit{staggered treatment adoption}).  Several other papers (e.g., \citet{gobillon-magnac-2016,xu-2017,callaway-karami-2023,imbens-kallus-mao-2021,brown-butts-2022}; see below for additional related discussion) have proposed approaches for recovering treatment effect parameters under an interactive fixed effects model for untreated potential outcomes.  These papers typically either require (i) the number of time periods to be large (in the sense of growing with the sample size rather than being fixed) or (ii) require additional auxiliary conditions such as exclusion restrictions or assumptions about the time-varying error terms being serially uncorrelated.  These types of extra conditions are, however, implausible in many applications in economics, and, moreover, are not required by more common approaches such as difference-in-differences or including unit-specific linear trends.  As discussed earlier, long panels are simply not available in a large fraction of applications in economics (and, even when a long panel is available, it may not be credible that a particular model is stable over a long time horizon; see \citet{arkhangelsky-etal-2021} for more discussion of this point).  Likewise, assumptions that rule out serial correlation in time-varying error terms are often seen to be implausible in applications in economics (for example, this is a main issue discussed in \citet{bertrand-duflo-mullainathan-2004}).  Arguably, these issues at least partially explain the relative lack of popularity of interactive fixed effects models in empirical work in microeconomics.   

Regarding staggered treatment adoption, a number of recent papers have considered difference-in-differences approaches to identifying causal effect parameters in the presence of treatment effect heterogeneity and staggered treatment adoption.  This literature has pointed out a number of weaknesses of two-way fixed effects regressions (which have been the dominant approach to implementing difference-in-differences identification strategies for many years) in this context (\citet{chaisemartin-dhaultfoeuille-2020,borusyak-jaravel-spiess-2022,goodman-2021,sun-abraham-2021}) and proposed alternative estimators that circumvent these issues (\citet{callaway-santanna-2021,gardner-2022}, among others).  These papers all treat variation in treatment timing as a nuisance and propose approaches that sidestep issues that are caused by staggered treatment adoption but do not show up in cases where the treatment timing is common across all units. 

In this paper, we consider a similar setting --- we consider a case with the same data availability and are interested in the same types of causal effect parameters.  However, rather than viewing staggered treatment adoption as a nuisance, we show how staggered treatment adoption can be exploited to recover causal effect parameters under substantially more sophisticated models than are typically used in the difference-in-differences literature.  And, in particular, we show that staggered treatment adoption provides the opportunity to recover causal effect parameters under substantial/complex violations of parallel trends.  Our approach does not require a large number of time periods nor does our approach require any of the auxiliary assumptions mentioned above.  Conceptually, our insight is that, rather than relying on exclusion restrictions or assumptions ruling out serially correlation in the time-varying error terms to introduce additional moment conditions to identify the relevant parameters in a model with a fixed number of time periods, the groups (defined by the timing of treatment) provide an additional source of moment conditions that can be used to identify the relevant parameters in the model.  We also emphasize that, although we mainly focus on an interactive fixed effects model for untreated potential outcomes, this same insight could be quite useful in a number of other applications that require extra moment conditions to identify the model.

\subsubsection*{Related Work}

Our paper is related to a large literature on interactive fixed effects.  Foundational work in this literature includes \citet{pesaran-2006,bai-2009,ahn-lee-schmidt-2013}, among many others.  This literature is vast, so we will mostly confine this section to papers at the intersection of the interactive fixed effects literature and the causal inference literature.  That being said, we note that, of the papers mentioned above, our approach is most closely related to the strand of the literature that builds on \citet{ahn-lee-schmidt-2013} as our approach does not require a large number of time periods and relies on GMM types of identification/estimation arguments.

There are a few recent papers that consider a treatment effects setting in the context of an interactive fixed effects model for untreated potential outcomes.  Some of these papers require access to a large number of periods.  This includes \citet{gobillon-magnac-2016,xu-2017,chan-kwok-2021}.  Work on the synthetic control method also broadly falls in this category; for example, \citet{abadie-diamond-hainmueller-2010} emphasize the connection between the synthetic control method and interactive fixed effects models, and, like the papers mentioned above, the synthetic control method is often rationalized in a setting where the number of time periods is large.   Other related work includes \citet{athey-bayati-doudchenko-imbens-khosravi-2021,bai-ng-2021}; see \citet{liu-wang-xu-2021} for more details on these types of approaches.  Several recent papers have considered interactive fixed effects models for untreated potential outcomes without relying on arguments where the number of time periods needs to be large (in particular, \citet{callaway-karami-2023,imbens-kallus-mao-2021,brown-butts-2022,brown-butts-westerlund-2023}).  This strand of the literature is most similar to what we consider in the current paper.  Unlike these papers, however, we do not require assumptions about time invariance of parameters in the model for untreated potential outcomes (which can allow for particular covariates to be used as instruments) or independence assumptions on the time-varying unobservables (which can allow for outcomes in other time periods to be used as instruments) or other auxiliary conditions.  Instead, we are able to generate moment conditions to identify the parameters in the model for untreated potential outcomes from the staggered nature of the treatment adoption.

Our work also builds on recent papers in the difference-in-differences literature (\citet{chaisemartin-dhaultfoeuille-2020,goodman-2021,callaway-santanna-2021,sun-abraham-2021,marcus-santanna-2021,wooldridge-2021,gardner-2022,borusyak-jaravel-spiess-2022,dube-girardi-jorda-taylor-2023}, among others).  Staggered treatment adoption has been a main case that has been considered in this literature.  However, that literature has seen staggered treatment adoption as a nuisance and proposed ways to circumvent issues with traditional panel data estimation strategies that arise due to staggered treatment adoption.  Notable among these papers, \citet{marcus-santanna-2021} point out that, in a staggered treatment adoption setting, identification strategies that are based on parallel trends assumptions can be highly over-identified.  Given over-identification, they propose more efficient estimators and tests for parallel trends. Our arguments are related to the over-identification case they consider but we instead allow for the extra moment conditions to be used to identify causal effect parameters while substantially relaxing the parallel trends assumption.

More generally, violations of the parallel trends assumption for untreated potential outcomes is a primary concern in empirical work that utilizes panel data in order to identify causal effect parameters.  For example, the vast majority of empirical papers using DID-type identification strategies report event studies that usually include pre-treatment estimates of pseudo-treatment effects with the goal of assessing the credibility of the parallel trends assumption (see \citet{freyaldenhoven-hansen-shapiro-2019,roth-2022} for related discussion). %
When parallel trends seems likely to be violated (or as a robustness check), another common strategy is to introduce individual-specific linear trends in untreated potential outcomes (\citet{heckman-hotz-1989,wooldridge-2005,mora-reggio-2019}) which allow for certain (though relatively limited) violations of parallel trends.  The approach proposed in the current paper generalizes these linear trend models.  Another approach is to allow for violations of parallel trends that result in partial identification of treatment effect parameters of interest (e.g., these ideas include things like allowing for (i) ``not-too-big'' violations of linear trends or (ii) that violations of parallel trends in post-treatment periods are ``not-too-different'' from violations of parallel trends in pre-treatment periods (\citet{manski-pepper-2018,rambachan-roth-2020}).  %

\section{Identification}

\paragraph{Notation}

We consider a case where there are $\mathcal{T}$ time periods of panel data with $n$ units.  We denote a particular time period by $t \in \{1, 2 \ldots, \mathcal{T}\}$.  We focus on the case where $\mathcal{T}$ is fixed.  We consider the case with a binary treatment $D_{it}$ that is equal to 1 if unit $i$ is treated in time period $t$ and is equal to 0 otherwise.  To formalize the idea of staggered treatment adoption, we make the following assumption.
\begin{assumption}[Staggered Treatment Adoption] \label{ass:staggered} For all units and for all time periods $t=2,\ldots,\mathcal{T}$, $D_{it-1}=1 \implies D_{it} = 1$.
\end{assumption}
\Cref{ass:staggered} implies that, once a unit becomes treated, it remains treated in subsequent periods.  Staggered treatment adoption is common in many applications in economics.  For example, many location-specific (e.g., state-level) policies follow a staggered adoption pattern where policies are implemented in different locations at different points in time while remaining in place in subsequent periods (at least over the relatively short time horizons that are the relevant case for the setting that we consider).  Other treatments in economics can be ``scarring'' in the sense that once a unit becomes treated, the unit ``changes state'' and is considered to be treated in subsequent periods as well.  To give a specific example, \citet{sun-abraham-2021} (in the context of discussing \citet{dobkin-finkelstein-kluender-notowidigdo-2018}) consider the effect of hospitalization (the treatment) on various individual-level economic variables.  In this setting hospitalization is scarring in the sense that treated units are not hospitalized year after year, but rather once an individual becomes hospitalized, that individual permanently moves over to being in the treated group.  See also \citet{chaisemartin-dhaultfoeuille-2022a,callaway-2023} for additional discussion about staggered treatment adoption.  

One noteworthy implication of staggered treatment adoption is that we can fully account for a unit's entire treatment history by that unit's ``group''.  We define a unit's group by the time period when the unit becomes treated, and use the notation $G_i$ for this variable.  For units that do not participate in the treatment, we (somewhat arbitrarily) set $G_i = \infty$.  If there are units that are already treated in the first period, we drop those units.\footnote{It is without loss of generality that we have access to a never-treated group.  In applications where all units eventually become treated, it is not possible to use an identification strategy that relies on having a comparison group (such as difference-in-differences or the strategy that we propose in the current paper) to recover treatment effect parameters in periods after all units become treated.  Therefore, our approach in this case (and the approach taken in most other papers with the same sort of data) would be to drop all periods after all units have become treated.  In this case, in the last remaining period, there would be one not-yet-treated group which would be the never-treated group that we refer to here.}  This is standard in the literature on policy evaluation with panel data; for example, difference-in-differences types of identification arguments drop units treated in the first period because (i) they are not useful for learning about the path of untreated potential outcomes (except under strong additional assumptions) and (ii) it is not possible to recover treatment effect parameters for this group as we never observe their untreated potential outcomes --- the same issues apply to the setting that we consider below.  We denote the full set of groups by $\mathcal{G} \subseteq \{2,\ldots,\mathcal{T},\infty\}$.  We additionally denote the set of groups that ever participate in the treatment by $\bar{\mathcal{G}} = \mathcal{G} \setminus \infty$.  To deal with the interactive fixed effects, we also have to drop some early-treated groups, but this depends on the number of interactive fixed effects in the model (in particular, we need to have access to at least $R+1$ pre-treatment periods where $R$ is the number of interactive fixed effects in the model).  We return to this issue below.

Next, we define potential outcomes.  Let $Y_{it}(g)$ denote the outcome that unit $i$ would experience in time period $t$ if it were in group $g$, and, for notational convenience, we define $Y_{it}(0)$ as the outcome that unit $i$ would experience in time period $t$ if it did not participate in the treatment in any time period --- we refer to this as a unit's untreated potential outcome.  In each time period, the observed outcome is given by the potential outcome corresponding to a unit's actual group.  That is, we observe $Y_{it} = Y_{it}(G_i)$.  We make the following assumption.
\begin{assumption}[No Anticipation] \label{ass:no-anticipation}
    For all units and for any time period $t < G_i$ (i.e., pre-treatment time periods for unit $i$), $Y_{it} = Y_{it}(0)$.
\end{assumption}
\Cref{ass:no-anticipation} says that outcomes in pre-treatment periods are not affected by participating in the treatment in subsequent periods.  This assumption is common in the literature though we note that it is straightforward to weaken this assumption to ``limited anticipation'' where outcomes in periods that are ``far enough'' away from the treatment period are not affected by eventually participating in the treatment.  In the current paper, in order to focus on main ideas, we do not consider this extension, but the related ideas in \citet{callaway-santanna-2021,sun-abraham-2021,callaway-karami-2023} would immediately apply to our setting (in particular, these arguments would essentially suggest just to ``back up'' the identification strategy into earlier periods).  Finally, for this section, we make an assumption about the sampling process.
\begin{assumption}[Observed data]\label{ass:sampling}
    The observed data consists of $\{Y_{i1},\dots,Y_{i\mathcal{T}},G_i\}_{i=1}^n$ which are $iid$ where $n$ is the number of units.
\end{assumption}
\Cref{ass:sampling} says that we have access to an iid sample across units.  In practice, this allows, for example, for the outcomes to be serially correlated.  Our identification arguments below apply immediately in settings with clustering, and it is straightforward to extend our inference results to cases with clustering (we provide more details below).

\subsection{Parameters of Interest}

Next, we introduce the parameters that our approach will target below.  Our immediate target parameter of interest is the group-time average treatment effect, which, for $t \geq g$ (post-treatment time periods), is defined as
\begin{align*}
    ATT(g,t) := \E[Y_t(g) - Y_t(0) | G=g].
\end{align*} 
This is the mean difference between treated potential outcomes and untreated outcomes for group $g$ in time period $t$.  Group-time average treatment effects have been emphasized in recent work on difference-in-differences and show up as building blocks both for understanding limitations of TWFE regressions (\citet{chaisemartin-dhaultfoeuille-2020}) and for proposing alternative estimation strategies that circumvent the limitations of TWFE regressions (\citet{callaway-santanna-2021}).  It is important to note that, conditional on $G=g$, $Y_t(g)$ is an observed outcome.  However, $Y_t(0)$ is not an observed outcome.  Thus, the challenge for identifying $ATT(g,t)$ is in recovering $\E[Y_t(0) | G=g]$.

In the treatment effects literature with panel data, it is common to aggregate $ATT(g,t)$'s into lower-dimensional treatment effect parameters.  We briefly discuss the two most popular aggregations, but note that others are also possible (see \citet{callaway-santanna-2021} for alternative aggregations).  We start with an \textit{event-study} type of aggregation.  First, define $e:=t-g$ which denotes the length of exposure to the treatment; for example, $e=0$ when $t=g$ which is the period that units in group $g$ become treated.  Also, define $\mathcal{G}_e := \{ g \in \mathcal{G} | g+e \leq \mathcal{T} \}$, which is the set of groups that are observed to have participated in the treatment for $e$ periods. Then, consider the parameter
\begin{align*}
    ATT^{ES}(e) := \sum_{g \in \mathcal{G}_e}  ATT(g,g+e) \P(G=g|G \in \mathcal{G}_e)
\end{align*}
which is the average effect of participating in the treatment across units that have been exposed to the treatment for exactly $e$ time periods (note that the time period where exposure to the treatment is equal to $e$ can vary across units).

Next, we consider an aggregation into an overall treatment effect  parameter.  As a step in this direction, first define 
\begin{align*}
    ATT^G(g) := \frac{1}{\mathcal{T} - g + 1} \sum_{t=g}^{\mathcal{T}} ATT(g,t)
\end{align*}
which is the average effect of participating in the treatment that units in group $g$ experienced across all their post-treatment time periods.  Then, a natural overall treatment effect parameter is 
\begin{align*}
    ATT^O := \sum_{g \in \bar{\mathcal{G}}} ATT^G(g) \P(G=g|G \in \bar{\mathcal{G}})
\end{align*}
which is the average effect of participating in the treatment across all units that participated in the treatment in any time period.  

Both the event study and overall average treatment effect are commonly reported in applications.  For the identification results below, it is important to note that both of these parameters are weighted averages of $ATT(g,t)$'s (with weights that are straightforward to estimate).  This implies that, if we can identify each $ATT(g,t)$, then it will be possible to translate $ATT(g,t)$'s into event study parameters or an overall treatment effect parameter if these are the ultimate target parameter(s) for a particular application.  Thus, our arguments below focus on identifying group-time average treatment effects.  Finally, some of our identification arguments result in the identification of group-time average treatment effects for a subset of groups or time periods; in those cases, the aggregated parameters that we discuss here may only be identified for a restricted set of groups and time periods as well.  We defer these sorts of issues to later in the paper.

\subsection{Identifying Group-Time Average Treatment Effects}

Given the framework discussed above, we now introduce an interactive fixed effects model for untreated potential outcomes and, subsequently, our approach to identifying group-time average treatment effects in this context.  We make the following assumptions:
\begin{assumption}[Interactive Fixed Effects Model for Untreated Potential Outcomes] \label{ass:ife}
\begin{equation}
    \label{eqn:IFEmodel}
    Y_{it}(0) = \theta_t + \eta_i + \lambda_i' F_t + e_{it}
\end{equation}
where $\lambda_i$ and $F_t$ are $R$ dimensional vectors.
\end{assumption}
\begin{assumption}[Unconfoundedness Conditional on Unobserved Heterogeneity] \label{ass:sel}
    \begin{align*}
    \E[Y_{t}(0) |\eta, \lambda, G] = \E[Y_{t}(0) |\eta, \lambda] \quad  \textrm{a.s.}
    \end{align*}
\end{assumption}

\Cref{ass:ife} says that untreated potential outcomes are generated by an interactive fixed effects model.  Because we consider a setting with a fixed number of time periods, we treat $\theta_t$ and $F_t$ as being fixed parameters (or, alternatively, our approach can be seen as conditional on the realizations of $\theta_t$ and $F_t$).  In general, because the number of cross-sectional units is large, our strategy will be to consistently estimate (functions of) these parameters.  On the other hand, we treat $\eta_i$ and $\lambda_i$ as being random.  We also interpret $\eta_i$ and $\lambda_i$ as unobserved heterogeneity that can be distributed differently across groups.  These differences in distribution can lead to different levels and trends of untreated potential outcomes for different groups.  The literature on interactive fixed effects models often refers to $F_t$ as \textit{factors} and $\lambda_i$ as \textit{factor loadings}.  For convenience, we sometimes use this terminology below, but for the most relevant applications to our approach (ones with a binary treatment and  fixed-$\mathcal{T}$), interpreting $\lambda_i$ as unobserved heterogeneity and $F_t$ as a time-varying effect of unobserved heterogeneity is probably most natural.   

The model in \Cref{ass:ife} reduces to the sort of TWFE model for untreated potential outcomes that leads to DID identification strategies (see, e.g., \citet{blundell-dias-2009}) when either $F_t$ is constant across $t$ (in this case the interactive fixed effects term is absorbed into the individual fixed effect $\eta_i$) or if $\lambda$ has the same mean across groups.\footnote{Related to this discussion, we also explicitly include unit and time fixed effects.  Earlier work typically noted that two-way fixed effects models were special cases of interactive fixed effects models, but it is common in more recent work to explicitly (and separately) include the two-way structure (see, for example, \citet{callaway-karami-2023,brown-butts-2022} for related discussion).} A related side-effect of including explicit unit and time fixed effects is that the dimension of the interactive fixed effect term, is governed by the number of factors that vary over time and by the number of factor loadings whose means vary across groups (see the discussion below for more details).  

\Cref{ass:sel} says that, if one could observe/condition on the unobserved heterogeneity terms $\eta$ and $\lambda$ then the average untreated potential outcome would be the same across groups.  Another way to think about this assumption is that, in terms of generating untreated potential outcomes, the important differences between groups are due to their distribution of $\eta$ and $\lambda$.  This sort of assumption is very common in the literature on treatment effects with panel data; see, for example, \citet{gobillon-magnac-2016,xu-2017,gardner-2020,callaway-karami-2023}. Importantly, these assumptions do not put any structure on how treated potential outcomes are generated.  They also allow for units to select into participating in the treatment on the basis of their treated potential outcomes and their unobserved heterogeneity ($\eta$ and $\lambda$).

An implication of \Cref{ass:ife,ass:sel} is that 
\begin{align} \label{eqn:Eu}
    \E[e_t |\eta, \lambda, G] = 0
\end{align}
which we use below as a source of moment conditions to identify parameters from the interactive fixed effects model. 

\Cref{ass:staggered,ass:no-anticipation,ass:sampling,ass:ife,ass:sel} are the main assumptions that we make in the paper (up to some rank conditions discussed in the next sections).  Before continuing, it is worth emphasizing what we have \textit{not assumed}.  First, 
notice from the conditional exogeneity condition in \Cref{eqn:Eu} that correlation in $e_{it}$ across $t$ is not restricted.  This rules out strategies that rely on using outcomes in other periods as an additional source of identifying information as in some of the arguments in \citet{callaway-karami-2023} and \citet{imbens-kallus-mao-2021}.  Serial correlation in $e_{it}$ is generally thought to be quite prevalent in most of the fixed-$\mathcal{T}$ policy evaluation settings that our approach is relevant to (see, in particular, \citet{bertrand-duflo-mullainathan-2004}).  Second, we do not require any extra conditions on covariates that enter the model as in \citet{callaway-karami-2023,brown-butts-2022,brown-butts-westerlund-2023} that allow them to be used as excluded instruments.\footnote{In our case, there are no covariates that are even included in the model.  To be clear, it would be straightforward to include covariates in the approach that we propose below.  These could be handled in analogous ways to including covariates in typical panel data applications (and are, therefore, in some sense, not very interesting from an econometrics standpoint).  %
This is in contrast to the other approaches mentioned above that exploit restrictions on the covariates (such as restrictions on how the covariates effects can vary over time or by imposing auxiliary models for the covariates) to achieve identification.  See \Cref{rem:covariates} below for additional discussion along these lines.}   In contrast to these approaches, below we will exploit the staggered treatment adoption in order to achieve identification.

\subsubsection{Baseline Case with One Interactive Fixed Effect}\label{SubSect:Baseline}  To start with, we consider a small case that is helpful to understand the identification strategy in the current paper.  In the next section, we consider generalizations allowing for (i) more time periods, (ii) more interactive fixed effects, and (iii) more groups. For now, suppose that $R=1$, so that the model in \Cref{ass:ife} becomes
\begin{align} \label{eqn:ife-reg-baseline}
Y_{it}(0) = \theta_t + \eta_i + \lambda_i F_t + e_{it}
\end{align}
In addition, suppose that $\mathcal{T}=4$ and $\mathcal{G} = \{3,4,\infty\}$ so that there is a group that is treated in periods 3, 4, and an untreated group.  Here, we focus on identifying $ATT(3,3)$ (the average effect of participating in the treatment for group 3 in period 3).  First, notice that
\begin{align} \label{eqn:att-3-3}
    ATT(3,3) &= \E[Y_3(3) - Y_3(0) | G=3] \nonumber \\
    &= \E[Y_3(3) - Y_2(0) | G=3] - \E[Y_3(0) - Y_2(0) | G=3] \nonumber \\
    &= \E[\Delta Y_3 | G=3] - \E[\Delta Y_3 (0)| G=3]
\end{align}
where we use the notation $\Delta Y_t := Y_t - Y_{t-1}$ and where the first equality is just the definition of $ATT(3,3)$, the second equality adds and subtracts $\E[Y_2(0)|G=3]$, which is the mean untreated potential outcome for group 3 in time period 2, and the last equality holds because $Y_3(3)$ and $Y_2(0)$ are observed outcomes for group 3.  \Cref{eqn:att-3-3} highlights that, in order to recover $ATT(3,3)$, the key identification challenge is to recover $\E[\Delta Y_3(0)|G=3]$, that is, how untreated potential outcomes would have changed over time for group 3 had it not become treated in period 3.  For this section, we make the following additional assumptions.
\begin{namedassumption}{6-Baseline}[Factor Rank Condition] $F_2 \neq F_1$.  \label{ass:factor-baseline}  
\end{namedassumption}
\begin{namedassumption}{7-Baseline}[Factor Loading Rank Condition] $\E[\lambda|G=4] \neq \E[\lambda|G=\infty]$. \label{ass:loading-baseline}
\end{namedassumption}

\noindent \Cref{ass:factor-baseline} says that the factors change between the first two periods.  \Cref{ass:loading-baseline} says that the mean of $\lambda$ is different between group 4 and the never-treated group (which are the two relevant comparison groups in this section).  We discuss both of these assumptions and their practical significance in more detail at the end of this section. %

As a first step towards recovering $\E[\Delta Y_3(0) | G=3]$ from \Cref{eqn:att-3-3}, notice that
\begin{align}
    \Delta Y_{i3}(0) = \Delta \theta_3 + \lambda_i\Delta F_3 + \Delta e_{i3}. \label{eqn:Delta-y3}
\end{align}
Similarly,
\begin{align}
    \Delta Y_{i2}(0) = \Delta \theta_2 + \lambda_i \Delta F_2 + \Delta e_{i2} \label{eqn:Delta-y2}
\end{align}
and this second equation implies that
\begin{align}
    \lambda_i &= \Delta F_2^{-1} \Big( \Delta Y_{i2}(0) - \Delta \theta_2 - \Delta e_{i2} \Big). \label{eqn:lambda}
\end{align}
where this expression holds just by rearranging terms from \Cref{eqn:Delta-y2}.  Notice that this step uses \Cref{ass:factor-baseline} to avoid dividing by 0.  Thus, from plugging the expression for $\lambda_i$ in \Cref{eqn:lambda} back into \Cref{eqn:Delta-y3}, we have that
\begin{align}
    \Delta Y_{i3}(0) &= \Delta \theta_3 + \lambda_i \Delta F_3 + \Delta e_{i3} \nonumber \\
    &= \left( \Delta \theta_3 - \frac{\Delta F_3}{\Delta F_2}\Delta \theta_2\right) + \frac{\Delta F_3}{\Delta F_2}\Delta Y_{i2}(0) + \left( \Delta e_{i3} - \frac{\Delta F_3}{\Delta F_2}\Delta e_{i2} \right) \nonumber \\
    & =: \theta_{3}^* + F_{3}^* \Delta Y_{i2}(0) + v_{i3} \label{eqn:ife-reg} 
\end{align}
where we define $\theta_3^* := \Delta \theta_3 - \frac{\Delta F_3}{\Delta F_2}\Delta \theta_2$, $F_3^* := \frac{\Delta F_3}{\Delta F_2}$, and $v_{i3} := \Delta e_{i3} - \frac{\Delta F_3}{\Delta F_2}\Delta e_{i2}$. Taking the expectation of \Cref{eqn:ife-reg} conditional on being in group 3, we have that
\begin{align} \label{eqn:Delta-y3-g3}
    \E[\Delta Y_3(0) | G=3] &=\theta_{3}^* + F_{3}^* \E[\Delta Y_{2} | G=3].
\end{align}
\Cref{eqn:Delta-y3-g3} holds from \Cref{eqn:ife-reg} because (i) $\Delta Y_{i2}(0)$ is observed for units in group 3 (thus, the untreated potential outcomes on the right hand side of \Cref{eqn:ife-reg} are observed outcomes for group 3), and (ii) $\E[v_{3} | G=3] = \E\Big[ \E[v_3|\eta, \lambda, G=3] \Big| G=3\Big] = 0$ (which holds by the law of iterated expectations and then by \Cref{ass:sel}).  

\Cref{eqn:Delta-y3-g3} therefore suggests that identifying $ATT(3,3)$ hinges on identifying the two-dimensional parameter vector $(\theta_{3}^*, F_{3}^*)'$.  Towards this end, notice that for groups 4 and $\infty$, which are not-yet-treated in period 3, $\Delta Y_{i3}(0)$ is observed. This suggests the possibility of recovering $(\theta_3^*,F_3^*)'$ using data coming from those groups through \Cref{eqn:ife-reg}.  This is the gist of the strategy that we use below.  An important complication arises, however, because, by construction, $\Delta Y_{i2}(0)$ is correlated with $v_{i3}$ in \Cref{eqn:ife-reg} --- this is because $v_{i3}$ contains $\Delta e_{i2}$.  This rules out recovering $(\theta_3^*,F_3^*)'$ directly from the regression of $\Delta Y_{i3}$ on $\Delta Y_{i2}$ using units that are not-yet-treated by period 3. This is the same type of issue that shows up in all of the literature on treatment effects in interactive fixed effects in fixed-$\mathcal{T}$ settings; it is the reason that the existing approaches discussed above impose extra conditions to generate additional moment conditions to be able to recover parameters from the interactive fixed effects model and, subsequently, treatment effect parameters.  

The key insight of our identification strategy is that, if we have two distinct groups that are not-yet-treated in period 3, those two groups provide two moment conditions that can potentially be used to recover the two parameters $\theta_3^*$ and $F_3^*$.  In particular, \Cref{ass:sel} implies that, for any group $g$ and for any time period $t$
\begin{align*}
    \E[e_t | G=g] = \E\Big[ \E[e_t | \eta, \lambda, G=g] | G=g\Big] = 0
\end{align*}
where the first equality holds by the law of iterated expectations and the second equality holds by \Cref{ass:sel}.  In the setting considered here, this further implies that, for any group $g$ 
\begin{align*}
    \E[v_3 | G=g ] &= \E[\Delta e_3 | G=g] - F_3^* \E[\Delta e_2 | G=g] = 0
\end{align*}
This implies the following moment conditions
\begin{align}
    0 = \E[\Delta Y_3(0) | G=g] - \Big(\theta_3^* + F_3^* \E[\Delta Y_2(0) | G=g]\Big) \label{eqn:moment-conditions}
\end{align}
$\Delta Y_{i3}(0)$ is not observed for units in group 3, so this moment condition is infeasible to estimate for group 3 (i.e., as expected, group 3 is not useful for recovering $(\theta_3^*,F_3^*)'$); however, $\Delta Y_{i3}(0)$ is observed for units in group 4 and the never-treated group.  Thus, there are two available moment conditions and two parameters to estimate which implies that the necessary order condition is satisfied here.  From these two moment equations separately for group 4 and the never-treated group (and after some straightforward algebra), it can be shown that
\begin{align} \label{eqn:F3-star-baseline}
    F_3^* &= \frac{\E[\Delta Y_3 | G=\infty] - \E[\Delta Y_3 | G=4]}{\E[\Delta Y_2| G=\infty] - \E[\Delta Y_2 | G=4]} \\[10pt]
    \text{ and}  \qquad \theta_3^* &= \E[\Delta Y_3 | G=4] - F_3^* \E[\Delta Y_2 | G=4]
\end{align}
which implies that $F_3^*$ and $\theta_3^*$ are identified.  %

Before continuing, it is worth briefly commenting on the role that \Cref{ass:factor-baseline,ass:loading-baseline} play in the discussion above. Towards this end, notice that for $t=2,3$, 
\begin{align}
    \E[\Delta Y_t | G=\infty] - \E[\Delta Y_t | G=4] = \Big(\E[\lambda|G=\infty] - \E[\lambda|G=4]\Big) \Delta F_t \label{eqn:F3-explanation}
\end{align}
which follows from the model for untreated potential outcomes in \Cref{eqn:ife-reg-baseline} and similar arguments as are used throughout this section.  Notice that these are the expressions that show up in the numerator and in the denominator for $F_3^*$ above, and further notice that these expressions depend on differences in the mean of $\lambda$ across groups.  %
Next, it is clear that $F_3^*$ will not be identified if $\E[\Delta Y_2 | G=\infty] = \E[\Delta Y_2 | G=4]$ (this is the term that shows up in the denominator of the expression for $F_3^*$ above).  From \Cref{eqn:F3-explanation}, we can see that both assumptions are required for this difference to be non-zero.  More intuitively, by construction our setup rules out (i) $F_1=F_2=F_3=F_4$ (otherwise, the interactive fixed effect term would be absorbed into the unit fixed effect $\eta_i$) and also rules out (ii) $\E[\lambda | G=3] = \E[\lambda|G=4] = \E[\lambda | G=\infty]$ (otherwise, the interactive fixed effect term would be absorbed into the time fixed effect $\theta_t$); in either case, given our setup, it would imply that $R=0$ rather than $R=1$.   \Cref{ass:factor-baseline,ass:loading-baseline} impose additional requirements relative to this baseline.  First, if $F_1=F_2$, then all three groups will have the same trends in outcomes between the first two periods --- and these are the only two pre-treatment periods for group 3.  In this case, we would not be able to distinguish between effects of the treatment in group 3 relative to changes in $F_3$ (i.e., differences in mean outcomes across groups in period 3 could arise for either of those reasons).  \Cref{ass:factor-baseline} rules out this case.  Second, if $\E[\lambda|G=4] = \E[\lambda|G=\infty]$, then group 4 and the never-treated group will have the same trends in outcomes over time, regardless of changes in $F_t$ over time, and, therefore, would not be useful for learning about changes in $F_t$ in periods after group 3 becomes treated.  \Cref{ass:loading-baseline} rules out this case.  

Given that $F_3^*$ and $\theta_3^*$ are identified, it immediately follows that $ATT(3,3)$ is identified and is given by
\begin{align*}
    ATT(3,3) = \E[\Delta Y_3 | G=3] - \Big( \theta_3^* + F_3^* \E[\Delta Y_2 | G=3] \Big)
\end{align*}

To conclude this section, notice that the main cost of our approach is that we are not able to identify as many group-time average treatment effects as would be possible using other identification strategies.  For example, \citet{callaway-karami-2023} suppose that the researcher has access to a time-invariant covariate that does not affect the path of untreated potential outcomes and show that this sort of covariate can be used to generate additional moment conditions to identify the parameters in the model for untreated potential outcomes.  Our approach does not require this sort of covariate (which is a key advantage of our approach), but it comes at the cost of only being able to identify $ATT(3,3)$ without recovering $ATT(3,4)$ or $ATT(4,4)$ which would be feasible using the approach in \citet{callaway-karami-2023}.

\subsubsection{General Case} \label{sec:general-case}

In this section, we extend the results above to a setting with more periods, more groups, and allow for more interactive fixed effects.  The arguments in this section target recovering $ATT(g,t)$ for a particular group $g \in \bar{\mathcal{G}}$.  %
Given the model in \Cref{ass:ife}, after taking first differences, we have that
\begin{align*}
    \Delta Y_{it}(0) = \Delta \theta_t + \lambda_i'\Delta F_t + \Delta e_{it}
\end{align*}
which eliminates the unit fixed effect $\eta_i$. Next, define
\begin{align*}
    \Delta Y_i(0) := \begin{bmatrix}
        \Delta Y_{i2} \\ \vdots \\ \Delta Y_{i\mathcal{T}}
    \end{bmatrix}; \quad \Delta \theta := \begin{bmatrix}
        \Delta \theta_2 \\ \vdots \\ \Delta \theta_{\mathcal{T}}
    \end{bmatrix}; \quad \bmd \mathbf{F} := \begin{bmatrix}
        \Delta F_2' \\ \vdots \\ \Delta F_{\mathcal{T}}'
    \end{bmatrix}; \text{ and } \Delta e_i := \begin{bmatrix}
        \Delta e_{i2} \\ \vdots \\ \Delta e_{i\mathcal{T}}
    \end{bmatrix}
\end{align*}
where $\Delta Y_i(0)$, $\Delta \theta$, and $\Delta e_i$ are all $(\mathcal{T}-1)$ dimensional vectors, and $\bmd \mathbf{F}$ is $(\mathcal{T}-1)\times R$ matrix.  Similarly, define
\begin{align*}
    \Delta Y_i^{pre(g)}(0) := \begin{bmatrix}
        \Delta Y_{i2} \\ \vdots \\ \Delta Y_{ig-1}
    \end{bmatrix}; \quad \Delta \theta^{pre(g)} := \begin{bmatrix}
        \Delta \theta_2 \\ \vdots \\ \Delta \theta_{g-1}
    \end{bmatrix}; \quad \bmd \mathbf{F}^{pre(g)} := \begin{bmatrix}
        \Delta F_2' \\ \vdots \\ \Delta F_{g-1}'
    \end{bmatrix}; \text{ and } \Delta e_i^{pre(g)} := \begin{bmatrix}
        \Delta e_{i2} \\ \vdots \\ \Delta e_{ig-1}
    \end{bmatrix}
\end{align*}
where $\Delta Y_i^{pre(g)}(0)$, $\Delta \theta^{pre(g)}$, and $\Delta e_i^{pre(g)}$ are all $(g-2) \times 1$ vectors and $\bmd \mathbf{F}^{pre(g)}$ is a $(g-2)\times R$ matrix.  Next, define $p_g := \P(G=g)$ and 
\begin{align*}
    \mathbf{\Lambda} := \E\left[ \frac{\indicator{G=g}}{p_g} \begin{pmatrix} 1 & \lambda' \end{pmatrix} \right]_{g \in \mathcal{G}}
\end{align*}
where the notation $\Big[\ \cdot \ \Big]_{g \in \mathcal{G}}$ indicates that there is one row in the matrix for each group in $\mathcal{G}$.  Thus, $\mathbf{\Lambda}$ is a $|\mathcal{G}| \times (R+1)$ matrix where $|\mathcal{G}|$ denotes the cardinality of the set $\mathcal{G}$ (which is the number of groups). 
 By construction, we have that $\textrm{Rank}(\bmd \mathbf{F}) = R$ and that $\textrm{Rank}(\mathbf{\Lambda}) = R+1$. For example, if it were the case that $\textrm{Rank}(\bmd \mathbf{F}) = (R-1)$, then the unit fixed effect would effectively absorb one of the interactive fixed effects terms and the model could equivalently be re-written with $R-1$ factors.  Similarly, if $\textrm{Rank}(\mathbf{\Lambda}) = R$, then the time fixed effect would effectively absorb one of the interactive fixed effects terms and the model could equivalently be re-written with $R-1$ factors (see \Cref{app:rank-explanation} for a more detailed explanation).   These conditions also impose that $(\mathcal{T}-1) \geq R$ (a restriction on having enough time periods) and that $|\mathcal{G}| \geq (R+1)$ (a restriction on having enough groups); however, we strengthen both of these conditions in the discussion below.  

Below, we focus on identifying treatment effects for a particular group $g$ in a particular post-treatment period $t \geq g$.  Define $\mathcal{G}^{comp}(g,t) := \{ g' \in \mathcal{G} : t < g' \}$. (i.e., this is the set of groups that have not yet been treated by period $t$). Additionally, define 
\begin{align*}
    \ell^{comp}(g,t) := \begin{bmatrix} \frac{\indicator{G=g'}}{p_{g'}} \end{bmatrix}_{g' \in \mathcal{G}^{comp}(g,t)}
\end{align*}
which is an $|\mathcal{G}^{comp}(g,t)|$ dimensional vector, and also define
\begin{align*}
    \mathbf{\Lambda}^{comp}(g,t) := \E\left[ \ell^{comp}(g,t) \begin{pmatrix} 1 & \lambda'\end{pmatrix} \right]
\end{align*}
which is a $|\mathcal{G}^{comp}(g,t)| \times (R+1)$ matrix.  We make the following two assumptions

\begin{assumption}\label{ass:rank-delta-F} $\textrm{Rank}(\bmd \mathbf{F}^{pre(g)}) = R$
\end{assumption}

\begin{assumption}\label{ass:rank-Lambda} $\textrm{Rank}(\mathbf{\Lambda}^{comp}(g,t)) = R+1$
\end{assumption}
\Cref{ass:rank-delta-F,ass:rank-Lambda} generalize \Cref{ass:factor-baseline,ass:loading-baseline} from the case with a single factor in the previous section to the case with $R$ factors considered here.  The intuition is similar too.  \Cref{ass:rank-delta-F} requires that there be enough variation in the factors themselves in pre-treatment periods (and, like the earlier case, rules out cases where there is limited variation in the factors in early periods but where there is more variation in the factors in post-treatment periods). \Cref{ass:rank-Lambda} requires enough variation in the means of $\lambda$ across the available comparison groups.  Besides these similarities, in this case, these conditions also imply some additional restrictions on the groups and time periods for which our approach can recover $ATT(g,t)$.  First, \Cref{ass:rank-delta-F} immediately requires that $(g-2) \geq R$.  This means that we have enough pre-treatment periods given the number of factors $R$.  And, in particular, this means that we can only recover $ATT(g,t)$ for groups for which we observe at least $R+1$ pre-treatment periods (e.g., if $R=2$, then we can only identify treatment effects for groups that become treated in period 4 or later).   Second, \Cref{ass:rank-delta-F} implies that we must have that $|\mathcal{G}^{comp}(g,t)| \geq (R+1)$.  This limits the periods for which we can recover $ATT(g,t)$.  In particular, define $t^{max}(g)$ to be the largest value of $t$ such that $|\mathcal{G}^{comp}(g,t)| \geq R+1$. 
Then, this condition rules out recovering $ATT(g,t)$ in periods $t > t^{max}(g)$ (i.e., later periods where there is not a large enough set of groups that have not been treated).  By extension, for any group that becomes treated in these late periods, we are not able to recover any group-time average treatment effects for that group.  Given some number of factors $R$, we denote the set of groups that satisfy both criteria mentioned above (i.e., that both have enough pre-treatment periods and that are treated early enough to have a large enough comparison group) as $\mathcal{G}^\dagger$; that is, $\mathcal{G}^\dagger := \{g \in \mathcal{G} : (R+2) \leq g \leq t^{max}(g) \}$. %
 This is the set of groups for which we will recover $ATT(g,t)$ using our approach.

To give an example, suppose that $R=2$ and that $\mathcal{G} = \{2,\ldots,\mathcal{T},\infty\}$ (relative to the notation above, here we are supposing that we have groups that become treated in every period as well as a never-treated group).  In this case, $|\mathcal{G}^{comp}(g,t)| = \mathcal{T} - t + 1$; if $R=2$, then $\mathcal{G}^\dagger = \{4, \ldots , \mathcal{T}-2\}$, and we can recover $ATT(g,t)$ for any group $g \in \mathcal{G}^\dagger$ in periods $g \leq t \leq (\mathcal{T}-2 )$ (i.e., all post-treatment periods for group $g$ except the last two periods: $(\mathcal{T}-1)$ and $\mathcal{T}$).

Next, we provide our main identification arguments.  Given the interactive fixed effects model in \Cref{ass:ife}, notice that
\begin{align}\label{eqn:DelY_pre}
    \Delta Y_i^{pre(g)}(0) = \Delta \theta^{pre(g)} + \bmd \mathbf{F}^{pre(g)}\lambda_i + \Delta e_i^{pre(g)}
\end{align}
which implies that
\begin{align}\label{eqn:lambda_i}
    \lambda_i = \mathbf{H}^{pre(g)} \Big( \Delta Y_i^{pre(g)}(0) - \Delta \theta^{pre(g)} - \Delta e_i^{pre(g)}\Big)
\end{align}
where $\mathbf{H}^{pre(g)} := (\mathbf{\Omega}' \bmd \mathbf{F}^{pre(g)})^{-1} \mathbf{\Omega}'$ and $\mathbf{\Omega}$ is a known (or consistently estimable) $(g-2) \times R$ matrix with rank $R$ that satisfies $\textrm{Rank}(\mathbf{\Omega}'\bmd \mathbf{F}^{pre(g)}) = \textrm{Rank}(\bmd \mathbf{F}^{pre(g)})$.\footnote{In practice, there are several leading candidates for $\mathbf{\Omega}$. First, the approach suggested in \citet{callaway-karami-2023} effectively sets $\mathbf{\Omega} = [\mathbf{0}_{R\times (g-2)-R}, \mathbf{I}_R ]'$.  A second candidate is to use the first $R$ principal components of $\Delta Y_i^{pre(g)}$, so that $\mathbf{\Omega}$ is the probability limit of the first $R$ eigenvectors of $\bmd {\mathbf{Y}^{pre(g)}}^{'} \bmd \mathbf{Y}$ (here $\bmd \mathbf{Y}^{pre(g)}$ is the $n^{comp(g,t)} \times (g-2)$ data matrix of pre-treatment outcomes where $n^{comp(g,t)}$ is the number of units in the comparison group for group $g$ in period $t$).  There are tradeoffs to different choices here.  For example, the first choice mentioned above for $\mathbf{\Omega}$ results in only using the most recent $R$ periods of $\Delta Y_{it}(0)$ on the right-hand side of \Cref{eqn:lambda_i} rather than all pre-treatment periods.  \citet{callaway-karami-2023} argue that using fewer pre-treatment periods has an advantage of being more robust to the model for untreated potential outcomes not holding in periods further away from the treatment.  But it comes at the cost of requiring that $\begin{bmatrix} \Delta F_{g-R} & \cdots & \Delta F_{g-1} \end{bmatrix}'$ has rank $R$ (which strengthens \Cref{ass:rank-delta-F}); note that this approach could be tweaked to ``select'' $\Delta F_t$ in periods where the researcher is confident that the rank condition holds for that subset of periods.  On the other hand, using the principal components uses data from all periods, but $\mathbf{\Omega}$ must be estimated and there may be other auxiliary conditions needed for $\textrm{Rank}\big(\mathbf{\Omega}'\bmd \mathbf{F}^{pre(g)}\big) = R$. The most appropriate choice for $\mathbf{\Omega}$ may vary across different applications.}
\Cref{eqn:lambda_i} holds from \Cref{eqn:DelY_pre} by multiplying by $\mathbf{H}^{pre(g)}$ and then cancelling and re-arranging terms.
Moreover, for some particular post-treatment period $t$, we have that
\begin{align}
    Y_{it}(0) - Y_{ig-1}(0) &= (\theta_t - \theta_{g-1}) + (F_t - F_{g-1})'\lambda_i + (e_{it} - e_{ig-1}) \nonumber \\
    &= (\theta_t - \theta_{g-1}) + (F_t - F_{g-1})'\mathbf{H}^{pre(g)}\Big( \Delta Y_i^{pre(g)}(0) - \Delta \theta^{pre(g)} - \Delta e_i^{pre(g)}\Big) + (e_{it} - e_{ig-1}) \nonumber \\
    &= \theta^*(g,t) +  \widetilde{\Delta Y}_i^{pre(g)}(0)'F^*(g,t) + v_i(g,t) \label{eqn:mp-estimating-equation}
\end{align}
where 
\begin{align*}
    \widetilde{\Delta Y}_i^{pre(g)}(0) &:= \mathbf{\Omega}'\Delta Y_i^{pre(g)} \\
    \theta^*(g,t) &:= (\theta_t - \theta_{g-1}) - (F_t - F_{g-1})' \mathbf{H}^{pre(g)} \Delta \theta^{pre(g)} \\
    F^*(g,t) &:= \Big((F_t-F_{g-1})'(\mathbf{\Omega}'\bmd \mathbf{F}^{pre(g)})^{-1}\Big)' \\
    v_i(g,t) &:= (e_{it} - e_{ig-1}) - (F_t - F_{g-1})' \mathbf{H}^{pre(g)} \Delta e_i^{pre(g)}
\end{align*}

Notice that $\theta^*(g,t)$ is scalar, $F^*(g,t)$ is a $R \times 1$ vector and $v_i(g,t)$ is scalar.  Similar to the simpler case discussed above, we will target estimating $\theta^*(g,t)$ and $F^*(g,t)$ below.  Because $\mathbf{\Omega}$ is known, $\widetilde{\Delta Y}_i^{pre(g)}$ amounts to being an $R$ dimensional vector of regressors in \Cref{eqn:mp-estimating-equation}.  Furthermore, for any group $g' \in \mathcal{G}$, $\E[v_i(g,t)|G=g'] = 0$ (which holds under \Cref{ass:sel}).  Similar to the simpler case discussed above, these will be the source of moment conditions that we use below.  
In order to show that $ATT(g,t)$ is identified, we proceed in two steps.  First, we show that $ATT(g,t)$ is identified if we can identify $\theta^*(g,t)$ and $F^*(g,t)$.  Second, we show that (under certain conditions), $\theta^*(g,t)$ and $F^*(g,t)$ are indeed identified.  Towards showing the first part, notice that
\begin{align}
    ATT(g,t) &= \E[Y_t(g) - Y_t(0) | G=g] \nonumber \\
    &= \E[Y_t(g) - Y_{g-1}(0) | G=g] - \E[Y_t(0) - Y_{g-1}(0) | G=g] \nonumber \\
    &= \E[Y_t - Y_{g-1} | G=g] - \Big( \theta^*(g,t) + F^*(g,t)'\E[\widetilde{\Delta Y}^{pre(g)} | G=g] \Big) \label{eqn:attgt-identified}
\end{align}
where the first equality comes from the definition of $ATT(g,t)$, the second equality adds and subtracts $\E[Y_{g-1}(0)|G=g]$, and the third equality holds by (i) plugging in \Cref{eqn:mp-estimating-equation}, (ii) because $\E[v_i(g,t) | G=g] = 0$, and (iii) by replacing potential outcomes with their observed counterparts.  Given that $\theta^*(g,t)$ and $F^*(g,t)$ are known, this implies that $ATT(g,t)$ is identified and completes the first part of the argument described above.  

Next, we move to showing that $\theta^*(g,t)$ and $F^*(g,t)$ are indeed identified.  For a group $g' \in \mathcal{G}^{comp}(g,t)$, we have that
\begin{align*}
    0 = \E\left[ \frac{\indicator{G=g'}}{p_{g'}} v_i(g,t) \right] = \E\left[ \frac{\indicator{G=g'}}{p_{g'}} \left\{ \Big( Y_{t} - Y_{g-1}\Big) - \Big(\theta^*(g,t) - {\widetilde{\Delta Y}}^{{pre(g)}^{'}} F^*(g,t) \Big) \right\}  \right]
\end{align*}
This gives one moment condition for each group in $\mathcal{G}^{comp}(g,t)$.  Therefore, the order condition is that $|\mathcal{G}^{comp}(g,t)| \geq R+1$.  For example, in the case where $R=1$, there are two parameters to identify: $\theta^*(g,t)$ and $F^*(g,t)$ (which is a scalar in this case).  Thus, we need to have at least two groups that are not-yet-treated by period $t$. Stacking the above moment conditions, we have that 
\begin{align} \label{eqn:mom-conds}
    \mathbf{0}_{|\mathcal{G}^{comp}(g,t)|} = \E\left[ \ell^{comp}(g,t) \left\{ \Big( Y_{t} - Y_{g-1}\Big) - \Big(\theta^*(g,t) - {\widetilde{\Delta Y}}^{{pre(g)}^{'}} F^*(g,t) \Big) \right\}  \right]
\end{align}
Then, identification hinges on the matrix
\begin{align*}
    \mathbf{\Gamma}(g,t) := \E\left[ \ell^{comp}(g,t) \begin{pmatrix} 1 \\ \widetilde{\Delta Y}^{pre(g)} \end{pmatrix}' \right]
\end{align*}
$\mathbf{\Gamma}(g,t)$ is a $|\mathcal{G}^{comp}(g,t)| \times (R+1)$ matrix, and, for relevance to hold, we need that $\textrm{Rank}\big(\mathbf{\Gamma}(g,t)\big) = R+1$.  In the next proposition, we show that relevance holds under the combination of \Cref{ass:rank-delta-F,ass:rank-Lambda}.

\begin{proposition} \label{prop:ranks} 
    Under \Cref{ass:staggered,ass:no-anticipation,ass:sampling,ass:ife,ass:sel}, 
    \begin{itemize}
    \item[(1)] If both \Cref{ass:rank-delta-F,ass:rank-Lambda} hold, then 
    \begin{align*}
        \textrm{Rank}\big(\mathbf{\Gamma}(g,t)\big) = R+1 %
    \end{align*}
    \item[(2)] If either Assumption \ref{ass:rank-delta-F} or \ref{ass:rank-Lambda} is violated (in the sense that either $\textrm{Rank}\big(\bmd \mathbf{F}^{pre(g)}\big) < R$ or $\textrm{Rank}\big(\mathbf{\Lambda}^{comp}(g,t)\big) < R+1$), then
    \begin{align*}
        \textrm{Rank}\big(\mathbf{\Gamma}(g,t)\big) < R+1
    \end{align*}
    \end{itemize}
\end{proposition}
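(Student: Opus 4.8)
The plan is to derive an explicit factorization $\mathbf{\Gamma}(g,t) = \mathbf{\Lambda}^{comp}(g,t)\,\mathbf{B}$ for a suitable square matrix $\mathbf{B}$, and then read off both parts from the ranks of the two factors. The first step is to compute $\mathbf{\Gamma}(g,t)$ row by row. For any comparison group $g' \in \mathcal{G}^{comp}(g,t)$ we have $g' > t \geq g$, so periods $2,\ldots,g-1$ are pre-treatment for group $g'$ and hence $\Delta Y^{pre(g)} = \Delta Y^{pre(g)}(0)$ for these units. Substituting the first-differenced model $\Delta Y_i^{pre(g)}(0) = \Delta\theta^{pre(g)} + \bmd\mathbf{F}^{pre(g)}\lambda_i + \Delta e_i^{pre(g)}$ into $\widetilde{\Delta Y}^{pre(g)} = \mathbf{\Omega}'\Delta Y^{pre(g)}$, the row of $\mathbf{\Gamma}(g,t)$ indexed by $g'$ equals $\E\big[\tfrac{\indicator{G=g'}}{p_{g'}}(1,\ \widetilde{\Delta Y}^{pre(g)})'\big]$. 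Its first entry is $\P(G=g')/p_{g'}=1$; for the remaining $R$ entries the constant term contributes $\mathbf{\Omega}'\Delta\theta^{pre(g)}$, the factor term contributes $\mathbf{\Omega}'\bmd\mathbf{F}^{pre(g)}\E[\lambda\mid G=g']$, and the error term vanishes because \Cref{ass:sel} gives $\E[\Delta e^{pre(g)}\mid G=g']=0$. This cancellation of the error term is the crux of the argument and the single place where the structural assumptions enter.

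Having computed the rows, I would set $\mathbf{A} := \mathbf{\Omega}'\bmd\mathbf{F}^{pre(g)}$ (an $R\times R$ matrix) and $\mathbf{c} := \mathbf{\Omega}'\Delta\theta^{pre(g)}$, and verify directly that
\begin{align*}
    \mathbf{\Gamma}(g,t) = \mathbf{\Lambda}^{comp}(g,t)\,\mathbf{B}, \qquad \mathbf{B} := \begin{pmatrix} 1 & \mathbf{c}' \\ \mathbf{0}_{R\times 1} & \mathbf{A}' \end{pmatrix},
\end{align*}
since the row of $\mathbf{\Lambda}^{comp}(g,t)$ indexed by $g'$ is $(1,\ \E[\lambda\mid G=g']')$ and right-multiplication by $\mathbf{B}$ reproduces exactly the row computed in the previous step. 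Because $\mathbf{B}$ is block upper-triangular with the scalar $1$ in its top-left block, a short kernel computation gives $\textrm{Rank}(\mathbf{B}) = 1 + \textrm{Rank}(\mathbf{A})$, and by the defining property of $\mathbf{\Omega}$ (namely $\textrm{Rank}(\mathbf{\Omega}'\bmd\mathbf{F}^{pre(g)}) = \textrm{Rank}(\bmd\mathbf{F}^{pre(g)})$) we have $\textrm{Rank}(\mathbf{A}) = \textrm{Rank}(\bmd\mathbf{F}^{pre(g)})$.

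For part (1), both assumptions yield $\textrm{Rank}(\bmd\mathbf{F}^{pre(g)}) = R$, so $\mathbf{A}$ and hence $\mathbf{B}$ are invertible; since multiplication by an invertible matrix preserves rank, $\textrm{Rank}(\mathbf{\Gamma}(g,t)) = \textrm{Rank}(\mathbf{\Lambda}^{comp}(g,t)) = R+1$ by \Cref{ass:rank-Lambda}. For part (2), I would invoke submultiplicativity, $\textrm{Rank}(XY) \leq \min\{\textrm{Rank}(X),\textrm{Rank}(Y)\}$. If \Cref{ass:rank-delta-F} fails then $\textrm{Rank}(\mathbf{B}) = 1 + \textrm{Rank}(\bmd\mathbf{F}^{pre(g)}) < R+1$, forcing $\textrm{Rank}(\mathbf{\Gamma}(g,t)) \leq \textrm{Rank}(\mathbf{B}) < R+1$; if instead \Cref{ass:rank-Lambda} fails then $\textrm{Rank}(\mathbf{\Gamma}(g,t)) \leq \textrm{Rank}(\mathbf{\Lambda}^{comp}(g,t)) < R+1$. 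Either violation therefore delivers the strict inequality. The only non-routine step is the row computation together with the error-term cancellation via \Cref{ass:sel}; everything after the factorization is elementary linear algebra.
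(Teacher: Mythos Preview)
Your proof is correct and follows essentially the same factorization-plus-rank strategy as the paper. The only cosmetic difference is that the paper first demeans, writing $\widetilde{\mathbf{\Gamma}}(g,t) = \widetilde{\mathbf{\Lambda}}(g,t)\,(\bmd\mathbf{F}^{pre(g)})'\mathbf{\Omega}$ and using $\textrm{Rank}(\mathbf{\Gamma}) = \textrm{Rank}(\widetilde{\mathbf{\Gamma}})+1$, whereas you absorb the intercept column directly into the block upper-triangular $\mathbf{B}$; your packaging is slightly cleaner but the substance is identical.
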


Next, let $\mathbf{W}(g,t)$ denote a $|\mathcal{G}^{comp}(g,t)| \times |\mathcal{G}^{comp}(g,t)|$ positive definite weighting matrix.  The following result shows that $ATT(g,t)$ is indeed identified under the conditions considered in this section.
\begin{theorem}  \label{thm:attgt-identification} For some group $g \in \mathcal{G}^\dagger$, and for some time period $t \in \{g, \ldots, t^{max}(g)\}$ where $t^{max}(g)$ is the largest value of $t$ such that $|\mathcal{G}^{comp}(g,t)| \geq R+1$ and under \Cref{ass:staggered,ass:no-anticipation,ass:sampling,ass:ife,ass:sel,ass:rank-delta-F,ass:rank-Lambda}, 
\begin{align}
    \begin{pmatrix}
        \theta^*(g,t) \\ F^*(g,t) 
    \end{pmatrix} = \left( \mathbf{\Gamma}(g,t)' \mathbf{W}(g,t) \mathbf{\Gamma}(g,t) \right)^{-1} \mathbf{\Gamma}(g,t)' \mathbf{W}(g,t) \E[\ell^{comp}(g,t)(Y_{t} - Y_{g-1})] \label{eqn:param-gt-identified}
\end{align}
In addition, $ATT(g,t)$ is identified%
, and it is given by
\begin{align*}
    ATT(g,t) = \E[Y_t(g) - Y_{g-1} | G=g] - \Big( \theta^*(g,t) + F^*(g,t)'\E[\Delta Y^{pre(g)} | G=g] \Big)
\end{align*}
\end{theorem}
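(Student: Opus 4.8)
The plan is to reduce the theorem to the linear moment system already recorded in \eqref{eqn:mom-conds} together with the rank result of \Cref{prop:ranks}. Write $\beta_0 := (\theta^*(g,t), F^*(g,t)')'$ for the true $(R+1)$-dimensional parameter, set $X := (1, \widetilde{\Delta Y}^{pre(g)'})'$ so that $\mathbf{\Gamma}(g,t) = \E[\ell^{comp}(g,t)\, X']$, and let $m(g,t) := \E[\ell^{comp}(g,t)(Y_t - Y_{g-1})]$. The first step is to rewrite \eqref{eqn:mom-conds}: the bracketed residual equals $Y_t - Y_{g-1} - X'\beta_0$, so expanding by linearity of expectation and pulling the constant $\beta_0$ outside turns the stacked moment conditions into the exact linear identity
\begin{align*}
    \mathbf{\Gamma}(g,t)\, \beta_0 = m(g,t).
\end{align*}
The only content of this step is that all $|\mathcal{G}^{comp}(g,t)|$ equations hold \emph{simultaneously} at the true value, which is exactly what $\E[v_i(g,t)\mid G=g']=0$ for every $g' \in \mathcal{G}^{comp}(g,t)$ (established earlier from \Cref{ass:sel}) delivers. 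Hence $\beta_0$ solves this—possibly overdetermined—system.

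Next I would invoke \Cref{prop:ranks}(1): under \Cref{ass:rank-delta-F,ass:rank-Lambda}, $\textrm{Rank}(\mathbf{\Gamma}(g,t)) = R+1$, so $\mathbf{\Gamma}(g,t)$ has full column rank. Combined with positive definiteness of $\mathbf{W}(g,t)$, this makes $\mathbf{\Gamma}(g,t)'\mathbf{W}(g,t)\mathbf{\Gamma}(g,t)$ an $(R+1)\times(R+1)$ positive definite, hence invertible, matrix. Substituting $m(g,t) = \mathbf{\Gamma}(g,t)\beta_0$ into the right-hand side of \eqref{eqn:param-gt-identified} and cancelling shows the formula returns exactly $\beta_0$ for every admissible weighting matrix, which is the first display of the theorem and proves that $(\theta^*(g,t), F^*(g,t))'$ is identified. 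Full column rank also makes the solution of the linear system unique, so the choice of $\mathbf{W}(g,t)$ is genuinely immaterial to the recovered value.

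The second display then follows immediately. Equation \eqref{eqn:attgt-identified} already expresses $ATT(g,t)$ as $\E[Y_t - Y_{g-1}\mid G=g]$ minus $\big(\theta^*(g,t) + F^*(g,t)'\E[\widetilde{\Delta Y}^{pre(g)}\mid G=g]\big)$, and every term on the right is now identified: the two conditional expectations involve only observed outcomes for group $g$ (the post-treatment level $Y_t$, the pre-treatment first differences, and the known matrix $\mathbf{\Omega}$), while $\theta^*(g,t)$ and $F^*(g,t)$ were just identified. Therefore $ATT(g,t)$ is identified.

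The only substantive input is the relevance/rank condition, and that is entirely discharged by \Cref{prop:ranks}; conditional on it, what remains is a routine recovery of a parameter from a consistent (over-)identified linear moment system. The one point that warrants care is emphasizing that $\mathbf{\Gamma}(g,t)\beta_0 = m(g,t)$ holds \emph{at the truth} as a set of exact equalities—not merely as a minimum-distance approximation—since this is precisely what forces the weighted projection to return $\beta_0$ rather than some weighting-dependent compromise when $|\mathcal{G}^{comp}(g,t)| > R+1$. This is the main place where the over-identified structure could otherwise cause confusion, so I would state it explicitly before cancelling.
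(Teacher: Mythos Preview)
Your proposal is correct and follows essentially the same route as the paper's own proof: rewrite the stacked moment conditions \eqref{eqn:mom-conds} as the exact linear system $\mathbf{\Gamma}(g,t)\beta_0 = m(g,t)$, premultiply by $\mathbf{\Gamma}(g,t)'\mathbf{W}(g,t)$, invoke \Cref{prop:ranks} together with positive definiteness of $\mathbf{W}(g,t)$ to invert, and then read off $ATT(g,t)$ from \eqref{eqn:attgt-identified}. Your added remark that the moment equalities hold \emph{exactly} at the truth---so the weighted projection returns $\beta_0$ regardless of $\mathbf{W}(g,t)$ even in the over-identified case---is a useful clarification that the paper's proof leaves implicit.
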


\Cref{thm:attgt-identification} is our main identification result in the paper.  It formalizes the conditions under which group-time average treatment effects can be identified in settings with staggered treatment adoption when untreated potential outcomes are generated by an interactive fixed effects model.

To conclude this section, we discuss what sort of aggregated parameters can be recovered in our setting.  To start with, we consider a version of an event study parameter.  As earlier, we denote event time by $e=t-g$.  Then, define $\mathcal{G}^\dagger_e := \{ g \in \mathcal{G}^\dagger : g+e \leq \mathcal{T} \}$.  Then, consider the parameter
\begin{align*}
    ATT^{ES^{\dagger}}(e) := \sum_{g \in \mathcal{G}^\dagger_e} ATT(g,g+e) \P(G=g|G \in \mathcal{G}^\dagger_e)
\end{align*}
This is an event study parameter that measures the average effect of the treatment at different lengths of exposure to the treatment.  However, because $\mathcal{G}^\dagger_e \subseteq \mathcal{G}_e$ (i.e., the set of available groups for which our approach can recover group-time average treatment effects may be smaller than the full set of groups), $ATT^{ES^{\dagger}}(e)$ is, in general, not the same as $ATT^{ES}(e)$ defined earlier; nor is it possible to recover $ATT^{ES}(e)$ using our approach (except in the special case where $R=0$ --- though, as noted earlier, this case essentially reduces to a version of difference-in-differences).  That being said, if one were using alternative identification/estimation strategies where $ATT(g,t)$ for a larger set of groups and time periods and wished to compare the resulting event studies, it would be possible to apply the weights \textit{here} to the full set of $ATT(g,g+e)$ (where $g \in \mathcal{G}_e$ rather than $\mathcal{G}^\dagger_e$).  This would essentially zero-out the contribution of $ATT(g,g+e)$ for groups that are not in $\mathcal{G}^\dagger_e$.  

Next, consider an aggregation into a single overall average treatment parameter.  Towards this end, for groups $g \in \mathcal{G}^\dagger$, define
\begin{align*}
    ATT^{G^{\dagger}}(g) := \frac{1}{t^{max}(g)-g+1} \sum_{t=g}^{t^{max}(g)} ATT(g,t)
\end{align*}
where $t^{max}(g)$ is the last time period such that $|\mathcal{G}^{comp}(g,t)| \geq R+1$ (i.e., that there are enough groups that are still untreated that the identification strategy can be implemented).  This is the average treatment effect for group $g$ in all its post-treatment periods for which we are able to recover $ATT(g,t)$.  We can aggregate this into an overall treatment effect parameter by
\begin{align*}
    ATT^{O^{\dagger}} := \sum_{g \in \mathcal{G}^\dagger} ATT^{G^{\dagger}}(g) \P(G=g | G \in \mathcal{G}^\dagger)
\end{align*}
$ATT^{O^{\dagger}}$ is the average treatment effects across groups among all groups for which our approach is able to recover any group-time average treatment effects.  Like $ATT^{ES^{\dagger}}(e)$, $ATT^{O^{\dagger}}$ is not directly comparable to $ATT^O$ defined earlier when there is treatment effect heterogeneity because it does not include as many time periods or groups.  However, it does summarize the average effect of the treatment across the periods and groups for which we are able to learn about the average effect of the treatment.  Moreover, it can be compared to overall average treatment effects using alternative identification strategies by applying these weights to those group-time average treatment effects.

\subsection{Additional Discussion}

We conclude this section with several additional remarks.

\begin{remark}[Testable Implications] 

As discussed above, \Cref{ass:rank-delta-F} says that there is enough variation in the factors in pre-treatment factors to recover $R$, the true number of effective factors.  In some cases, it may be possible to detect that \Cref{ass:rank-delta-F} is violated (i.e., the number of detectable factors in pre-treatment periods for group $g$ is less than the number of effective factors $R$).  To see this, consider the example discussed in \Cref{SubSect:Baseline} (where there are four time periods,  $R=1$, $\mathcal{G}=\{3,4,\infty\}$, and the target is $ATT(3,3)$). \Cref{ass:rank-delta-F} rules out the case that $F_1 = F_2$.  However, consider an example that violates \Cref{ass:rank-delta-F} where $F_1=F_2\neq F_3$.  In this case, there is 1 factor, but, between periods 1 and 2,  groups 3, 4, and $\infty$, all have the same trends before period 3 (thus, between periods 1 and 2, it would look like there are 0 factors).  However, in this case, from period 2 to 3, if we saw groups 4 and $\infty$ trend differently, that suggests that there really is 1 factor.  We emphasize that, in this case, it is still not possible to recover $ATT(3,3)$ --- the denominator of \Cref{eqn:F3-star-baseline} is equal to 0, and, hence, $F_3^*$ is not identified.  Despite this, notice that, from \eqref{eqn:F3-explanation}, we have that 
\begin{align*}
    \Delta F_3 = \frac{\E[\Delta Y_3 | G=\infty] - \E[\Delta Y_3 | G=4]}{\E[\lambda|G=\infty] - \E[\lambda|G=4]}. 
\end{align*}
Here, although the denominator of the equation above cannot be identified from the sampling process, it is time-invariant. Thus, the temporal variation in the numerator $\E[\Delta Y_t | G=\infty] - \E[\Delta Y_t | G=4]$ is informative of the variation in $\Delta F_t$ for $t=2,3$.  And, if $\E[\Delta Y_3 | G=\infty] - \E[\Delta Y_3 | G=4] \neq 0$, it suggests that $R=1$ rather than $R=0$ which provides evidence against interpreting an estimand based on $R=0$ as $ATT(3,3)$.   It is also worth mentioning that the approach outlined here is only a partial test.  Continuing with the same example, suppose that $\E[\lambda | G=3] \neq \E[\lambda | G=4] = \E[\lambda | G=\infty]$.  In this case, groups 4 and $\infty$ would have the same trend in outcomes between periods 2 and 3 (hence, we would not find any evidence against $R=0$ from those two groups); but the trend in outcomes over time for group 3 would be different from the other groups even in the absence of the treatment (and this trend difference would wrongly be attributed to the treatment).  This issue is a challenging one to deal with (for any panel data-based approach to causal inference).  It is akin to a difference-in-differences setting where parallel trends holds in pre-treatment periods, but it is violated in post-treatment periods.%

\end{remark}

\begin{remark}[Rotation Problem]
    \label{rem:rotation-problem} Unlike much work in the literature on factor models, notice that our approach did not require any normalizations in order to deal with the so-called rotation problem for factor models.  This problem stems from any rotation of the factors leading to an observationally equivalent data generating process  (see \citet{ahn-lee-schmidt-2013} for additional discussion along these lines).  For our approach, we do not need to recover the factors, $F_t$, themselves, but rather we only need to recover $F^*(g,t)$.  Thus, we are able to side-step the rotation problem.
\end{remark}

\begin{remark} [Dealing with Observed Covariates] \label{rem:covariates}
    Observed covariates are easily incorporated in the above set-up.  \citet{callaway-karami-2023} focus on a setting with time-invariant covariates whose effect on untreated potential outcomes can vary over time; this leads to the model for untreated potential outcomes 
    \begin{align*}
        Y_{it}(0) = \theta_t + \eta_i + \lambda_i'F_t + X_i'\beta_t + e_{it}
    \end{align*}
    with $\E[e_t | \eta, \lambda, X, G] = 0$.  If the dimension of $X_i$ is $k$, then after following the same types of differencing arguments as above, there are $(R+1) + k$ parameters to identify (the same as before plus $k$ additional parameters coming from $X_i$).  One can get extra moments from adding covariates.  It immediately follows to add the following $k$ additional moment conditions
    \begin{align*}
        \mathbf{0}_k = \E[X v_i(g,t) | G \in \mathcal{G}^{comp}(g,t)]
    \end{align*}
    This effectively extends our approach to including covariates in the model.  Perhaps more interestingly (though we do not pursue it here) is that the covariates can potentially be an additional source of over-identification restrictions (which, in turn, could possibly be used to reduce the number of distinct comparison groups for a particular group and time period).  In particular, notice that the covariates actually lead to $k \times |\mathcal{G}^{comp}(g,t)|$ moment conditions
    \begin{align*}
        \mathbf{0}_{k \times |\mathcal{G}^{comp}(g,t)|} = \E[ X v_i(g,t) | G=g']_{g' \in \mathcal{G}^{comp}(g,t)}
    \end{align*}
    For example, returning to the setting considered in \Cref{SubSect:Baseline}, in period 3 there were two available comparison groups.  Introducing a single covariate gives one additional parameter to estimate, but it gives two additional moment conditions of the form above.  We leave fully exploiting these sorts of extra moment conditions from the covariates for future work.  Another popular model that includes covariates is
    \begin{align*}
        Y_{it}(0) = \theta_t + \eta_i + \lambda_i'F_t + X_{it}'\beta + e_{it}
    \end{align*}
    It is possible to deal with these time-varying covariates in much the same way as before.  After following the same differencing strategy as before, there are $k$ additional parameters to estimate, but there are also $k$ additional moment conditions that can be used to recover these parameters.
\end{remark}

\subsection{Estimation}

It is straightforward to estimate $\big(\theta^*(g,t),F^*(g,t)'\big)'$ using the sample analogue of \Cref{eqn:param-gt-identified}.  To conserve on notation in this section, define
\begin{align*}
    A_i(g) &:= \indicator{G_i=g}\begin{pmatrix}1 \\ \widetilde{\Delta Y}_i^{pre(g)} \end{pmatrix} \\
    \delta^*(g,t) &:= \Big(\theta^*(g,t),F^*(g,t)'\Big)'
\end{align*}
Applying the analogue principle to \eqref{eqn:param-gt-identified} given a positive definite matrix $\widehat{\mathbf{W}}$, the estimator of $\delta^*(g,t)$ is
\begin{align}\label{eqn:Estimator_delta}
    \widehat{\delta}^*(g,t) &= \left( \widehat{\mathbf{\Gamma}}(g,t)' \widehat{\mathbf{W}}(g,t)\widehat{\mathbf{\Gamma}}(g,t) \right)^{-1} \widehat{\mathbf{\Gamma}}(g,t)' \widehat{\mathbf{W}}(g,t) \E_n[\ell^{comp}_i(g,t)(Y_{it} - Y_{ig-1})]
\end{align} where $\E_n[\cdot]$ denotes the sample average operator across $i=1,\dots,n$ and \begin{align*}
    \widehat{\mathbf{\Gamma}}(g,t) := \E_n\left[ \ell_i^{comp}(g,t) \begin{pmatrix} 1 \\ \widetilde{\Delta Y}_i^{pre(g)} \end{pmatrix}' \right].
\end{align*} 
Let $\hat{p}_g:=\E_n[\indicator{G_i=g}]$.  By the definition of the set of all groups $\mathcal{G}$, we have that $p_g > 0$ for all $g \in \mathcal{G}$. From \Cref{thm:attgt-identification}, $ATT(g,t)$ can be re-written as
\begin{equation}
    ATT(g,t) = p_g^{-1} \left\{\E\Big[\indicator{G=g}(Y_t - Y_{g-1})\Big] - \E\Big[A(g)\Big]^\prime \delta^*(g,t) \right\} \label{eqn:attgt-est}
\end{equation} which, by the analogue and plugin principles, suggests the following estimator:
\begin{equation}
    \widehat{ATT}(g,t) = \hat{p}_g^{-1} \left\{ \E_n\Big[\indicator{G_i=g}(Y_{it} - Y_{ig-1})\Big] - \E_n\Big[A_i(g)\Big]^\prime \widehat{\delta}^*(g,t) \right\} \label{eqn:attgt-hat}
\end{equation}

We impose the following standard assumption.
\begin{assumption}\label{ass:BndConst_W} 
 $\E[||\mathbf{Y}||^4] < \infty$ and $\widehat{\mathbf{W}}(g,t) \xrightarrow{p} \mathbf{W}(g,t)$ where $\mathbf{Y}:=(Y_1,\ldots,Y_{\mathcal{T}})'$ and $\mathbf{W}(g,t)$ is positive definite.
\end{assumption}
\noindent As each element in $\ell^{comp}$ is bounded, a fourth-moment bound on $\ell^{comp}$ is thus satisfied automatically, i.e, $||\ell_i^{comp}||^4 < \infty$ by construction.  In \Cref{Lemma_deltaInfluence} in \Cref{app:proofs}, we show that 
\begin{align*}
    \sqrt{n}(\widehat{\delta}^*(g,t) - \delta^*(g,t)) = \mathbf{B}(g,t)\frac{1}{\sqrt{n}}\sum_{i=1}^{n}\ell_i^{comp}(g,t)'v_i(g,t) + o_p(1)
\end{align*}
where
\begin{align*}
    \mathbf{B}(g,t):= \left( \mathbf{\Gamma}(g,t)' \mathbf{W}(g,t)\mathbf{\Gamma}(g,t) \right)^{-1} \mathbf{\Gamma}(g,t)' \mathbf{W}(g,t)
\end{align*}
To proceed with the asymptotic distribution of $\widehat{ATT}(g,t)$, we introduce more notation. Define 
\begin{align*}
    \psi_{igt}^{(1)}&:= \frac{\indicator{G_i=g}(Y_{it} - Y_{ig-1}) - \E[\indicator{G=g}(Y_{t} - Y_{g-1})]}{p_g};\\
    \psi_{igt}^{(2)}&:= - \frac{\E[A_i(g)]}{p_g}' \mathbf{B}(g,t) \ell^{comp}_i(g,t)'v_i(g,t); \\
    \psi_{igt}^{(3)}&:= -\frac{\delta^*(g,t)'}{p_g}(A_i(g) -\E[A_i(g)]);\\
    \psi_{igt}^{(4)}&:= -\frac{ATT(g,t)}{p_g}(\indicator{G_i=g} - p_g);\quad \text{and}\\
    \psi_{igt}&:= \psi_{igt}^{(1)} + \psi_{igt}^{(2)} + \psi_{igt}^{(3)} + \psi_{igt}^{(4)}.
\end{align*} 

\begin{theorem}[Consistency and Asymptotic Normality]\label{Theorem:Asymp_Normal} Suppose \Cref{ass:staggered,ass:no-anticipation,ass:sampling,ass:ife,ass:sel,ass:rank-delta-F,ass:rank-Lambda,ass:BndConst_W} hold, then for some group $g \in \mathcal{G}^\dagger$, and for some time period $t \in \{g, \ldots, t^{max}(g)\}$ where $t^{max}(g)$ is the largest value of $t$ such that $|\mathcal{G}^{comp}(g,t)| \geq R+1$,
\begin{enumerate}[(i)]
    \item $\widehat{ATT}(g,t)$ is asymptotically linear, and it satisfies the relation
 \begin{align*}
     \sqrt{n}(\widehat{ATT}(g,t) - ATT(g,t)) = \frac{1}{\sqrt{n}}\sum_{i=1}^n \psi_{igt} + o_p(1);
\end{align*}
    \item $\widehat{ATT}(g,t) \xrightarrow{p} ATT(g,t) $ as $n \rightarrow \infty$ for each pair $(g,t)$ 
    \item In addition, \[\sqrt{n}(\widehat{ATT}(g,t) - ATT(g,t)) \xrightarrow{d} \mathcal{N}(0,\sigma_{gt}^2) \] where $\sigma_{gt}^2 = \E[\psi_{igt}^2] $.
\end{enumerate}
\end{theorem}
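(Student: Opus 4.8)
The plan is to treat $\widehat{ATT}(g,t)$ as a smooth plug-in functional of a few sample averages together with $\widehat{\delta}^*(g,t)$, linearize each component, and collect the resulting contributions so that they align termwise with $\psi^{(1)}_{igt}$ through $\psi^{(4)}_{igt}$. Writing $N:=\E[\indicator{G=g}(Y_t-Y_{g-1})]-\E[A(g)]'\delta^*(g,t)$ so that $ATT(g,t)=N/p_g$ by \Cref{eqn:attgt-est}, and denoting by $\widehat{N}$ the corresponding numerator in \Cref{eqn:attgt-hat}, I would first expand the ratio $\widehat{N}/\hat p_g - N/p_g$ to first order. Since $\hat p_g\xrightarrow{p}p_g>0$ under \Cref{ass:sampling}, this expansion is valid with remainder $o_p(n^{-1/2})$, and the sampling error in $\hat p_g$ produces exactly the $\psi^{(4)}_{igt}$ term because $N/p_g^2=ATT(g,t)/p_g$ and $\hat p_g-p_g=\E_n[\indicator{G_i=g}-p_g]$.

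Second, I would decompose $\widehat{N}-N$ algebraically as $(\widehat{M}-M)-\E[A(g)]'(\widehat{\delta}^*-\delta^*)-(\E_n[A_i(g)]-\E[A(g)])'\delta^*(g,t)$ plus the cross term $(\E_n[A_i(g)]-\E[A(g)])'(\widehat{\delta}^*-\delta^*)$, where $M$ and $\widehat M$ are the population and empirical versions of the first moment. The centered first-moment piece divided by $p_g$ gives $\psi^{(1)}_{igt}$, and the sampling error of $\E_n[A_i(g)]$ (multiplied by the fixed $\delta^*$) gives $\psi^{(3)}_{igt}$. Substituting the asymptotically linear representation established in \Cref{Lemma_deltaInfluence}, namely $\sqrt{n}(\widehat\delta^*(g,t)-\delta^*(g,t))=\mathbf B(g,t)\,n^{-1/2}\sum_i \ell_i^{comp}(g,t)'v_i(g,t)+o_p(1)$, into the $\E[A(g)]'(\widehat\delta^*-\delta^*)$ term yields $\psi^{(2)}_{igt}$; here $\mathbf B(g,t)$ is well defined because $\mathbf\Gamma(g,t)$ has full column rank by \Cref{prop:ranks} under \Cref{ass:rank-delta-F,ass:rank-Lambda} and $\widehat{\mathbf W}\xrightarrow{p}\mathbf W$ is positive definite by \Cref{ass:BndConst_W}. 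The essential bookkeeping step is to confirm that every cross term—e.g. $(\E_n[A_i(g)]-\E[A(g)])'(\widehat\delta^*-\delta^*)$ and the products generated by expanding $\hat p_g^{-1}$—is a product of two $O_p(n^{-1/2})$ factors, hence $o_p(n^{-1/2})$ and absorbed into the remainder after scaling by $\sqrt n$. Assembling the four surviving pieces gives part (i).

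Third, for parts (ii) and (iii) I would use \Cref{ass:sampling} so that $\{\psi_{igt}\}_i$ is an iid, mean-zero sequence: $\E[\psi^{(1)}_{igt}]=\E[\psi^{(3)}_{igt}]=\E[\psi^{(4)}_{igt}]=0$ by centering, while $\E[\psi^{(2)}_{igt}]=0$ because $\E[\ell^{comp}_i(g,t)'v_i(g,t)]=0$ follows from $\E[v_i(g,t)\mid G=g']=0$ for each $g'\in\mathcal G^{comp}(g,t)$. Finiteness of $\E[\psi_{igt}^2]$ follows from $\E[\|\mathbf Y\|^4]<\infty$ in \Cref{ass:BndConst_W} together with the boundedness of $\ell^{comp}$: each $\psi^{(j)}_{igt}$ is a bounded quantity times something built from at most products of two outcome differences, so Cauchy–Schwarz and the fourth-moment bound deliver a finite second moment. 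The Lindeberg–Lévy CLT then gives $n^{-1/2}\sum_i\psi_{igt}\xrightarrow{d}\mathcal N(0,\sigma_{gt}^2)$ with $\sigma_{gt}^2=\E[\psi_{igt}^2]$, and Slutsky combined with the $o_p(1)$ remainder from part (i) yields (iii); consistency (ii) is immediate since the $\sqrt n$-scaled error is $O_p(1)$, so $\widehat{ATT}(g,t)-ATT(g,t)=O_p(n^{-1/2})$.

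I expect the main obstacle to be the joint handling of the ratio linearization and the substitution of the asymptotic-linearity representation—specifically, showing that all second-order cross terms are genuinely $o_p(n^{-1/2})$. This hinges on controlling $\widehat{\mathbf\Gamma}(g,t)-\mathbf\Gamma(g,t)$ and $\widehat{\mathbf W}(g,t)-\mathbf W(g,t)$ so that $\mathbf B(g,t)$ is consistently estimated and $(\mathbf\Gamma'\mathbf W\mathbf\Gamma)^{-1}$ stays bounded in a neighborhood of the truth, which leans on the rank guarantee from \Cref{prop:ranks}; the remaining moment verifications are routine given \Cref{ass:BndConst_W}.
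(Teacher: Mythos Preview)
Your proposal is correct and follows essentially the same approach as the paper's proof: both decompose $\sqrt{n}(\widehat{ATT}(g,t)-ATT(g,t))$ into four pieces corresponding to $\psi^{(1)}_{igt},\dots,\psi^{(4)}_{igt}$, invoke \Cref{Lemma_deltaInfluence} for the $\widehat\delta^*$ contribution, show the cross terms are $o_p(n^{-1/2})$, and then apply the Lindeberg--L\'evy CLT. The only cosmetic difference is that you linearize the ratio $\widehat N/\hat p_g$ upfront via a delta-method expansion, whereas the paper carries $\hat p_g^{-1}$ through the algebraic decomposition and replaces it by $p_g^{-1}$ at the end; your explicit verification of $\E[\psi_{igt}]=0$ and $\E[\psi_{igt}^2]<\infty$ is slightly more careful than the paper's treatment but not a different argument.
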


Next, we move to establishing the joint asymptotic normality of $ATT(g,t)$ across all the periods and groups for which it is identified and to provide a uniform inference procedure as well as the limiting distributions of aggregated treatment effect parameters such as the event study and overall average treatment effect discussed above.  Recall that all of the aggregated treatment effect parameters discussed above can be written as and estimated by
\begin{align*}
    \theta = w'ATT = \sum_{g \in \mathcal{G}^{\dagger}} \sum_{t=g}^{t^{max}(g)} w(g,t) ATT(g,t) \quad \textrm{and} \quad \hat{\theta} = \hat{w}'\widehat{ATT} = \sum_{g \in \mathcal{G}^{\dagger}} \sum_{t=g}^{t^{max}(g)} \hat{w}(g,t) \widehat{ATT}(g,t),
\end{align*}
respectively, where $\theta$ is generic notation for an aggregated treatment effect parameter, $ATT$ is a vector that stacks $ATT(g,t)$ for $(g,t) \in \mathcal{G}^\dagger \times \{g, \ldots, t^{max}(g)\}$, $w$ is a vector that stacks weights on each $ATT(g,t)$ to aggregate them into $\theta$; and $\hat{\theta}$, $\hat{w}$, and $\widehat{ATT}$ are the corresponding estimators.  Also, notice that, in practice, the $w(g,t)$'s (the weights on particular group-time average treatment effects) need to be estimated, and, therefore, we should take into account their estimation effect.  Given the conditions discussed above, for all of the aggregated parameters that we consider, the weights are asymptotically linear, which we write generically as 
\begin{align*}
    \sqrt{n}(\hat{w} - w) = \frac{1}{\sqrt{n}}\sum_{i=1}^n \mathcal{W}_i + o_p(1)  
\end{align*}
with $\E[\mathcal{W}_i]=0$.  As a last piece of additional notation, define 
$\Psi_i$ as the vector comprising the set $\{\psi_{igt}\}_{(g,t)\in \mathcal{G}^\dagger \times \{g, \ldots, t^{max}(g)\}}$. %
The following result provides an asymptotic linear representation and an asymptotic normality result.
\begin{proposition}\label{Corollary:ATT}
    Under \Cref{ass:staggered,ass:no-anticipation,ass:sampling,ass:ife,ass:sel,ass:rank-delta-F,ass:rank-Lambda,ass:BndConst_W}, the estimator $\widehat{ATT}$ has the following asymptotic linear representation:
    \[\sqrt{n}(\widehat{ATT} - ATT) = \frac{1}{\sqrt{n}}\sum_{i=1}^n \Psi_i + o_p(1).\] Further,
    \begin{align*}
        &\sqrt{n}(\widehat{ATT} - ATT) \xrightarrow{d} \mathcal{N}(0,\E[\Psi\Psi^\prime]) \text{; and} \\
        &\sqrt{n}(\hat{w}'\widehat{ATT} - w'ATT) = \frac{1}{\sqrt{n}} \sum_{i=1}^n (w'\Psi_i + ATT'\mathcal{W}_i) + o_p(1) \xrightarrow{d} \mathcal{N}(0,\sigma_w^2)
    \end{align*}   where $\sigma_w^2:=\E[(w'\Psi_i + ATT'\mathcal{W}_i)^2]$.
\end{proposition}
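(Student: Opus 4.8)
The plan is to build directly on Theorem \ref{Theorem:Asymp_Normal}, which already supplies the scalar asymptotic linear representation $\sqrt{n}(\widehat{ATT}(g,t) - ATT(g,t)) = n^{-1/2}\sum_{i=1}^n \psi_{igt} + o_p(1)$ for each admissible pair $(g,t)$. Since $\Psi_i$ is by definition the vector stacking $\{\psi_{igt}\}$ over $(g,t) \in \mathcal{G}^\dagger \times \{g,\dots,t^{max}(g)\}$, the first claim follows immediately by stacking these finitely many representations; the remainders stack into an $o_p(1)$ vector because there are only finitely many coordinates. For joint asymptotic normality I would note that under \Cref{ass:sampling} the summands $\Psi_i$ are i.i.d., that each $\psi_{igt}$ is mean zero (each of its four components $\psi^{(j)}_{igt}$ is centered by construction, using $\E[\ell^{comp}(g,t)'v(g,t)]=0$ for $\psi^{(2)}$), and that $\E[\Psi\Psi']$ is finite --- the latter following from boundedness of $\ell^{comp}$ together with the fourth-moment bound $\E[\|\mathbf{Y}\|^4]<\infty$ in \Cref{ass:BndConst_W}. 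The multivariate Lindeberg--L\'evy CLT then gives $n^{-1/2}\sum_i \Psi_i \xrightarrow{d} \mathcal{N}(0,\E[\Psi\Psi'])$, and Slutsky's theorem absorbs the $o_p(1)$ vector to deliver $\sqrt{n}(\widehat{ATT}-ATT)\xrightarrow{d}\mathcal{N}(0,\E[\Psi\Psi'])$.

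For the aggregated parameter I would expand the scalar difference via the product rule:
\begin{align*}
\hat{w}'\widehat{ATT} - w'ATT = (\hat{w}-w)'ATT + w'(\widehat{ATT}-ATT) + (\hat{w}-w)'(\widehat{ATT}-ATT).
\end{align*}
The first claim together with Theorem \ref{Theorem:Asymp_Normal}(ii) gives $\widehat{ATT}-ATT = O_p(n^{-1/2})$, and the assumed asymptotic linearity of the weights gives $\hat{w}-w = O_p(n^{-1/2})$; hence the cross term is $O_p(n^{-1})$, i.e.\ $o_p(n^{-1/2})$, and drops out after scaling by $\sqrt{n}$. Substituting the two representations $\sqrt{n}(\hat{w}-w)=n^{-1/2}\sum_i\mathcal{W}_i+o_p(1)$ and $\sqrt{n}(\widehat{ATT}-ATT)=n^{-1/2}\sum_i\Psi_i+o_p(1)$ into the remaining two terms, and using that $(\hat{w}-w)'ATT$ is a scalar so that $\mathcal{W}_i'ATT = ATT'\mathcal{W}_i$, yields $\sqrt{n}(\hat{w}'\widehat{ATT}-w'ATT)=n^{-1/2}\sum_i(w'\Psi_i+ATT'\mathcal{W}_i)+o_p(1)$. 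The summands are i.i.d.\ and mean zero (both $\Psi_i$ and $\mathcal{W}_i$ are centered), so a final application of the univariate CLT and Slutsky's theorem gives convergence to $\mathcal{N}(0,\sigma_w^2)$ with $\sigma_w^2=\E[(w'\Psi_i+ATT'\mathcal{W}_i)^2]$.

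The main obstacle is the careful bookkeeping in the product expansion rather than any deep probabilistic argument: I need to confirm that the weight-estimation effect enters linearly through $ATT'\mathcal{W}_i$ (evaluated at the true weights and true $ATT$), that the quadratic cross term is genuinely negligible, and that the separate $o_p(1)$ remainders combine additively. A secondary point worth verifying is that finiteness of $\E[\Psi\Psi']$ (needed for the multivariate CLT) and of $\sigma_w^2$ both reduce to the fourth-moment condition in \Cref{ass:BndConst_W} together with the boundedness of the indicator-based weights $\ell^{comp}$; once these moment bounds are in place the CLTs apply verbatim. Everything else is a direct consequence of the single-$(g,t)$ result in Theorem \ref{Theorem:Asymp_Normal} combined with the assumed asymptotic linearity of $\hat{w}$.
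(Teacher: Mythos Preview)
Your proposal is correct and follows essentially the same route as the paper: stacking the scalar representations from \Cref{Theorem:Asymp_Normal} to obtain the vector form, applying the multivariate Lindeberg--L\'evy CLT, and then handling the aggregated parameter via the same three-term product expansion with the cross term discarded by $\sqrt{n}$-consistency. If anything, you supply more detail than the paper's proof (explicitly checking the mean-zero and moment conditions and invoking Slutsky), but the argument is the same.
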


\Cref{Theorem:Asymp_Normal} and \Cref{Corollary:ATT} establish the joint limiting distribution of all of the identified group-time average treatment effects as well as the aggregated treatment effect parameters.  These results can be used as the basis for conducting inference by estimating $\E[\Psi \Psi']$ and/or $\sigma^2_w$.  Instead of taking this approach, we opt for conducting inference using a multiplier bootstrap procedure that builds on the previous result. For brevity, we focus on inference for the group-time average treatment effects, but analogous results hold for the aggregated treatment effect parameters given the results above.  The multiplier bootstrap involves perturbing the influence function, and offers a number of advantages relative to the nonparametric bootstrap in terms of computational speed.  To fix ideas, consider some $iid$ mean-zero $\zeta_i$ with unit variance and a finite third moment that is drawn independently of the data, e.g., the standard normal or a binary $(-1,1)$ each with probability 1/2. The bootstrap estimate is obtained using
\begin{align*}
\widehat{ATT}^\star = \widehat{ATT} + \frac{1}{n}\sum_{i=1}^n \zeta_i \widehat{\Psi}_i
\end{align*}
where $\widehat{\Psi}_i$ is an estimate of $\Psi_i$ that replaces population parameters with estimates.  See \citet{kline-santos-2012} for more discussion of the multiplier bootstrap.

The next corollary establishes the (asymptotic) validity of the multiplier bootstrap procedure discussed above.
\begin{corollary} \label{prop:bootstrap-validity} Under \Cref{ass:staggered,ass:no-anticipation,ass:sampling,ass:ife,ass:sel,ass:rank-delta-F,ass:rank-Lambda,ass:BndConst_W},
    \begin{align*}
        \sqrt{n}\Big(\widehat{ATT}^* - \widehat{ATT}\Big) \xrightarrow{d^*} \N(0, \E[\Psi \Psi'])
    \end{align*}
    where $\xrightarrow{d^*}$ denotes convergence in bootstrap distribution (conditional on the original data).
\end{corollary}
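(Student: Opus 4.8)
The plan is to condition on the sample and verify a conditional Lindeberg--Feller central limit theorem for the multiplier sum, reducing the vector statement to a scalar one by the Cram\'er--Wold device. By the definition of the bootstrap draw, $\sqrt{n}(\widehat{ATT}^* - \widehat{ATT}) = n^{-1/2}\sum_{i=1}^n \zeta_i \widehat{\Psi}_i$ is an \emph{exact} identity, so there is no $o_p$ remainder to track and the entire argument reduces to a conditional CLT for this weighted sum. Fixing an arbitrary conformable vector $c$, I would work with the scalar sum $n^{-1/2}\sum_i \zeta_i (c'\widehat{\Psi}_i)$. Conditional on the data, the summands are independent across $i$ (because the $\zeta_i$ are $iid$ and drawn independently of the data), have conditional mean zero (since $\E[\zeta_i]=0$), and have conditional variance $n^{-1}\sum_i (c'\widehat{\Psi}_i)^2$ (since $\mathrm{Var}(\zeta_i)=1$).

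The key step is to show that the conditional variance converges, namely $n^{-1}\sum_{i=1}^n \widehat{\Psi}_i \widehat{\Psi}_i' \xrightarrow{p} \E[\Psi\Psi']$. I would split this as $n^{-1}\sum_i \Psi_i\Psi_i'$ plus the estimation-error term $n^{-1}\sum_i(\widehat{\Psi}_i\widehat{\Psi}_i' - \Psi_i\Psi_i')$. The first piece converges to $\E[\Psi\Psi']$ by the weak law of large numbers: each entry of $\Psi_i$ is a linear-in-$\mathbf{Y}$ combination of bounded factors (the normalized indicators in $\ell^{comp}(g,t)$ and in $A_i(g)$) with data-independent or consistently estimable coefficients, so the fourth-moment bound on $\mathbf{Y}$ in \Cref{ass:BndConst_W} yields $\E\|\Psi\|^2 < \infty$ and legitimizes the WLLN. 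For the estimation-error piece I would use that $\widehat{\Psi}_i$ arises from $\Psi_i$ by replacing the population quantities $p_g$, $\E[\indicator{G=g}(Y_t-Y_{g-1})]$, $\E[A(g)]$, $\mathbf{B}(g,t)$, $\delta^*(g,t)$, and $ATT(g,t)$ with sample analogues that are consistent by \Cref{thm:attgt-identification,Theorem:Asymp_Normal} together with \Cref{ass:BndConst_W}; expanding the quadratic and bounding the cross terms by Cauchy--Schwarz shows each contribution is $o_p(1)$, so the estimation error is asymptotically negligible.

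With the conditional variance pinned down, I would verify a conditional Lyapunov (hence Lindeberg) condition for the triangular array $\{n^{-1/2}\zeta_i(c'\widehat{\Psi}_i)\}$: since $\zeta_i$ has a finite third moment and $n^{-1}\sum_i |c'\widehat{\Psi}_i|^{3}$ is $O_p(1)$ by the same moment control (finite fourth moment of $\mathbf{Y}$ implies a finite third moment of the linear-in-$\mathbf{Y}$ quantity $c'\Psi_i$), the Lyapunov ratio carries an extra $n^{-1/2}$ factor and vanishes in probability. The scalar conditional Lindeberg--Feller theorem (as in \citet{kline-santos-2012}) then gives $n^{-1/2}\sum_i \zeta_i(c'\widehat{\Psi}_i) \xrightarrow{d^*} \N(0, c'\E[\Psi\Psi']c)$, and since $c$ is arbitrary the Cram\'er--Wold device delivers the stated vector convergence.

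The main obstacle will be the estimation-error term $n^{-1}\sum_i(\widehat{\Psi}_i\widehat{\Psi}_i'-\Psi_i\Psi_i')$, because $\widehat{\Psi}_i$ depends on the plug-in estimators nonlinearly (through products such as $\E_n[A_i(g)]'\mathbf{B}(g,t)$ and $\delta^*(g,t)'A_i(g)$). Controlling it requires pairing the consistency of each estimator with the $L^2$ moment bounds and handling the products by Cauchy--Schwarz; crucially, the index set $\mathcal{G}^\dagger \times \{g,\ldots,t^{max}(g)\}$ is finite, so a union bound over $(g,t)$ suffices and no genuine uniform empirical-process argument is needed.
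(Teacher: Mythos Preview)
The paper does not actually supply a proof of this corollary: it is stated immediately after \Cref{Corollary:ATT} and the reference to \citet{kline-santos-2012}, and the appendix contains proofs only for \Cref{prop:ranks}, \Cref{thm:attgt-identification}, \Cref{Lemma_deltaInfluence}, \Cref{Theorem:Asymp_Normal}, and \Cref{Corollary:ATT}. The implicit argument is simply that, once the asymptotically linear representation $\sqrt{n}(\widehat{ATT}-ATT)=n^{-1/2}\sum_i \Psi_i + o_p(1)$ with $\E\|\Psi\|^2<\infty$ is in hand (from \Cref{Corollary:ATT} and \Cref{ass:BndConst_W}), validity of the multiplier bootstrap follows from the general results in \citet{kline-santos-2012}.

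Your proposal is correct and supplies precisely the details the paper leaves implicit. The conditional Lindeberg--Feller argument via Cram\'er--Wold, together with the decomposition of $n^{-1}\sum_i \widehat{\Psi}_i\widehat{\Psi}_i'$ into a WLLN piece and an estimation-error piece controlled by consistency of the plug-ins, is the standard route and is exactly what underlies the Kline--Santos result the paper cites. Your observation that the index set is finite (so only a finite union bound is needed) and your use of the fourth-moment condition in \Cref{ass:BndConst_W} to control both the second moment of $\Psi_i$ and the Lyapunov ratio are the right ingredients. In short, there is no divergence in approach to report: the paper defers to an off-the-shelf result, and you have written out the argument that result encapsulates.
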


\Cref{prop:bootstrap-validity} says that, under standard conditions and conditional on the observed data, $\sqrt{n}(\widehat{ATT}^\star - \widehat{ATT})$ has the same asymptotic distribution as $\sqrt{n}(\widehat{ATT} - ATT)$ in \Cref{Corollary:ATT}. For each $(g,t)\in (g,t) \in \mathcal{G}^\dagger \times \{g, \ldots, t^{max}(g)\}$, one can construct standard errors for $\widehat{ATT}(g,t)$ using $\sigma_{gt}^\star = (q_{0.75}(g,t) - q_{0.25}(g,t))/(z_{0.75} - z_{0.25}) $ where $q_\tau(g,t)$ denotes the $\tau$'th quantile from the empirical distribution of $\sqrt{n}(\widehat{ATT}^\star(g,t) - \widehat{ATT}(g,t))$ and $z_\tau$ denotes the $\tau$'th quantile of the standard normal distribution.

\section{Simulations}

In this section, we provide Monte Carlo simulations to illustrate the finite sample properties of our proposed estimation strategy.  And, in particular, we compare our approach to the one in \citet{callaway-karami-2023} and to difference-in-differences and unit specific linear-trends approaches.  We generate untreated potential outcomes by
\begin{align*}
    Y_{it}(0) = \theta_t + \eta_i + \lambda_i'F_t + Z_i'\beta + u_{it}
\end{align*}
This corresponds to the model in \Cref{eqn:ife-reg} except for the term $Z_i'\beta$.  In our case, because this term is time-invariant, it is absorbed into the unit-fixed effect $\eta_i$, but this is the key term in the estimation strategy proposed in \citet{callaway-karami-2023}.  Following the simulations in \citet{callaway-karami-2023}, we consider the case where $\lambda_i$ and $Z_i$ are both vectors containing three elements.  We also consider the case where $\mathcal{T}=8$ and $\mathcal{G} = \{5,6,7,8,\infty\}$ (so that we have units that become treated in periods 5, 6, 7, 8, and some units that remain untreated in all periods).  We assign units to each group with equal probability.  For $j=1,2,3$, we take $Z_{ij} \sim_{iid} \N(0,1)$.  We also set $\beta_j = 0$ (which implies that it does not actually directly affect untreated potential outcomes though it can still be an additional source of moment conditions for the approach in \citet{callaway-karami-2023}).  Next, for $h \in \{0,1\}$, we set $\eta_i \sim h G_i + \varepsilon_{\eta,i}$ where $\varepsilon_{\eta,i} \sim \N(0,0.1)$.  We vary $h$ across simulations --- when $h=0$, the distribution of $\eta_i$ is the same across groups, but otherwise it is different. We set $\lambda_{i1} = 1 + 2G_i + \rho_1 Z_{i1} + \varepsilon_{i1}$, $\lambda_{i2} = 1 - 5G_i + \rho_2 Z_{i2} + \varepsilon_{i2}$, and $\lambda_{i3} = 5 - 10G_i + \rho_3Z_{i3} + \varepsilon_{i3}$ where $\varepsilon_j \sim_{iid} \N(0,1)$ for $j \in \{1,2,3\}$.  For all the simulations reported below, we set $\rho_j = 0.2$ for $j=1,2,3,$ which corresponds to the ``medium strength instrument'' case considered in \citet{callaway-karami-2023}.  Next, we set $\tilde{F}_{1t} = t$, $\tilde{F}_{2t} = (-1)^t t \log(t)$ and $\tilde{F}_{3t} = (-1)^{\indicator{t > 5}} (5-|5-t|)^2$.  We consider four types of designs below.  In designs labeled ``no unobs.\,het.'', we set $F_t=(0,0,0)'$ and $h=0$; in designs that labeled ``0 IFE'', we set $F_t=(0,0,0)'$ and $h=1$; for ``1 IFE'', we set $F_t=(\tilde{F}_{1t}, 0, 0)'$ and $h=1$; for ``2 IFE'', we set $F_t=(\tilde{F}_{1t}, \tilde{F}_{2t}, 0)'$ and $h=1$.  We (with a few exceptions) provide results for the overall average treatment effect.  For all the simulations below, $n=1000$, the results are based on 1000 Monte Carlo simulations.  Finally, in all simulations, we set $Y_{it}(1) = Y_{it}(0)$, so the true value of all treatment effect parameters is equal to 0.

\begin{table}
\caption{Monte Carlo Simulations Comparing Estimators}
\label{tab:mc-sims1}
\centering
\resizebox{\linewidth}{!}{
\begin{tabular}[t]{rrrrrrrrrrrrrrrrr}
\toprule
\multicolumn{17}{c}{Truth} \\
\cmidrule(l{3pt}r{3pt}){1-17}
\multicolumn{1}{c}{ } & \multicolumn{4}{c}{-1 IFE} & \multicolumn{4}{c}{0 IFE} & \multicolumn{4}{c}{1 IFE} & \multicolumn{4}{c}{2 IFE} \\
\cmidrule(l{3pt}r{3pt}){2-5} \cmidrule(l{3pt}r{3pt}){6-9} \cmidrule(l{3pt}r{3pt}){10-13} \cmidrule(l{3pt}r{3pt}){14-17}
 & Bias & RMSE & MAD & Rej. & Bias & RMSE & MAD & Rej. & Bias & RMSE & MAD & Rej. & Bias & RMSE & MAD & Rej.\\
\midrule
\addlinespace[0.3em]
\multicolumn{17}{l}{\textbf{Staggered IFE}}\\
\hspace{1em}IFE=-1 & 0.0017 & 0.0506 & 0.0367 & 0.064 & 4.8491 & 4.8537 & 4.8483 & 1.000 & 79.2864 & 79.3459 & 79.3705 & 1.000 & 411.6776 & 411.8159 & 412.0379 & 1.000\\
\hspace{1em}IFE=0 & 0.0043 & 0.0657 & 0.0448 & 0.059 & 0.0017 & 0.0637 & 0.0428 & 0.051 & 16.2459 & 16.2648 & 16.2357 & 1.000 & 261.5271 & 262.0339 & 261.1525 & 1.000\\
\hspace{1em}IFE=1 & 0.1496 & 2.9402 & 0.0737 & 0.030 & 0.0003 & 1.4413 & 0.0783 & 0.030 & 0.0012 & 0.0804 & 0.0517 & 0.141 & 0.0067 & 0.1512 & 0.1028 & 0.154\\
\hspace{1em}IFE=2 & -0.0477 & 2.8651 & 0.1050 & 0.032 & 23.8781 & 756.1583 & 0.1091 & 0.022 & 0.7073 & 21.0779 & 0.0922 & 0.037 & 0.2740 & 6.7571 & 0.1424 & 0.051\\
\addlinespace[0.3em]
\multicolumn{17}{l}{\textbf{CK-2023}}\\
\hspace{1em}IFE=0 & 0.0042 & 0.0657 & 0.0451 & 0.054 & 0.0016 & 0.0637 & 0.0432 & 0.053 & 14.9408 & 14.9435 & 14.9368 & 1.000 & 269.2484 & 269.4390 & 269.3669 & 1.000\\
\hspace{1em}IFE=1 & 0.0017 & 0.0726 & 0.0482 & 0.022 & -0.0015 & 0.0783 & 0.0518 & 0.030 & 5.1226 & 7.2105 & 4.9250 & 0.325 & 17.9017 & 28.4761 & 19.3960 & 0.235\\
\hspace{1em}IFE=2 & -0.0004 & 0.1331 & 0.0627 & 0.003 & -0.0021 & 0.1338 & 0.0610 & 0.004 & 1.4838 & 9.5252 & 3.3613 & 0.020 & 1.2997 & 18.3162 & 7.8324 & 0.010\\
\addlinespace[0.3em]
\multicolumn{17}{l}{\textbf{Difference-in-differences}}\\
\hspace{1em}& 0.0014 & 0.0667 & 0.0465 & 0.060 & -0.0009 & 0.0678 & 0.0457 & 0.071 & 14.9470 & 14.9499 & 14.9547 & 1.000 & 268.8663 & 269.0874 & 268.8677 & 1.000\\
\addlinespace[0.3em]
\multicolumn{17}{l}{\textbf{Linear Trends}}\\
\hspace{1em} & 0.0031 & 0.1442 & 0.0969 & 0.062 & 0.0002 & 0.1437 & 0.0967 & 0.064 & -0.0013 & 0.1467 & 0.0990 & 0.069 & 142.7103 & 147.1414 & 142.8595 & 1.000\\
\bottomrule
\end{tabular}}
\begin{justify}
    \small \textit{Notes: } The table provides simulation results that come from using the Staggered IFE approach proposed in the current paper, from \citet{callaway-karami-2023} (labeled CK-2023), from difference-in-differences, and from unit-specific linear trends estimation strategies.  All results in the table are provided for $ATT^O$.  The data generating process is described in the main text.  Columns labeled ``Bias'', ``RMSE'', ``MAD'', and ``Rej.'' report estimated bias, root mean squared error, median absolute deviation, and rejection rate for a test that $ATT(5,5)=0$ (which is true) at the 5\% significance level.  There are four sets of columns that differ based on the number of interactive fixed effects in the data generating process; the set of columns labeled ``no unobs.\,het'' have the distribution of $\eta_i$, which is the unit fixed effect, also to be the same across groups.  The row labeled ``no unobs.\,het'' uses the staggered IFE approach based on comparing levels of outcomes across groups.  The rows labeled ``IFE=j'' for $j \in \{0,1,2\}$ differ based on the number of interactive fixed effects included in the estimation strategy in that row.
\end{justify}
\end{table}

The first set of results is provided in \Cref{tab:mc-sims1}.  To start with, notice that difference-in-differences and linear trends approaches seem to perform well in the cases where they are expected to perform well and perform poorly in expected cases --- for difference-in-differences these cases are with no unobserved heterogeneity and when there are 0 interactive fixed effects; for linear trends, these cases are when there is no unobserved heterogeneity, when there are 0 interactive fixed effects, or with 1 interactive fixed effect (this case holds because the first factor is linear in the DGP that we consider here).  Similarly, \citet{callaway-karami-2023} performs roughly as expected.  The estimation strategy generally performs well when the correct number of interactive fixed effects are included in the model.  Including too few interactive fixed effects results in very poor performance of the estimation strategy.  On the other hand, including too many interactive fixed effects tends to result in somewhat higher RMSE and MAD (as well as conservative inference).  

As for our Staggered IFE approach, it performs somewhat better than  \citet{callaway-karami-2023} --- when the correct number of interactive fixed effects are included, our approach performs well.  It has lower mean squared error than CK-2023 in all specifications.  In the case with 2 interactive fixed effects, the root mean squared error is notably large though this seems to be driven by a small number of simulations where the estimator performs poorly as the bias is low, the median absolute deviation is small, and the estimator appears to control size well. In the case with one interactive fixed effect, our estimator over-rejects, but in the other cases, when the number of interactive fixed effects is correct, our inference procedure appears to work well.

It is also interesting to consider settings where the researcher includes the wrong number of interactive fixed effects.  First, including too few interactive fixed effects can result in very poor performance (one notable exception to this in the results in the table occurs in the case where we included only 1 interactive fixed effect but the true number of interactive fixed effects was 2).  On the other hand, including too many tended to mainly result in conservative inference rather than biased estimates (in some cases the bias and RMSE are large, but the estimator typically performs well in terms of median absolute deviation).  We leave a formal analysis of the implications of including too many interactive fixed effects to future work, but these simulations suggest an asymmetry between including too many interactive fixed effects relative to too few interactive fixed effects similar to what is noted in \citet{moon-weidner-2015} in a somewhat different context.

Next, we provide additional simulation results that vary the differences between groups (in terms of the mean of their interactive fixed effects).  We use largely the same data-generating process as above except that we only consider the case where the number of interactive fixed effects is exactly equal to 1.  We also set $\lambda_{i1} = 1 + G_i l + \epsilon_{i1}$.  We vary $l$ among $\{0.5, 0.1, 0.01, 0.001\}$.  Smaller values of $l$ indicate that the groups are more similar to each other in terms of their interactive fixed effects.  For these simulations, we only provide results using the Staggered IFE approach considered in the current paper.  We vary the number of interactive fixed effects included in the model and report results for the overall average treatment effect as well as event study parameters for $e=0,1,2$.

\begin{table}
\caption{Monte Carlo Results Varying Differences between Groups}
\label{tab:mc-sims2}
\centering
\resizebox{\linewidth}{!}{
\begin{tabular}[t]{rrrrrrrrrrrrrrrrr}
\toprule
\multicolumn{17}{c}{Truth} \\
\cmidrule(l{3pt}r{3pt}){1-17}
\multicolumn{1}{c}{ } & \multicolumn{4}{c}{l=0.5} & \multicolumn{4}{c}{l=0.1} & \multicolumn{4}{c}{l=0.01} & \multicolumn{4}{c}{l=0.001} \\
\cmidrule(l{3pt}r{3pt}){2-5} \cmidrule(l{3pt}r{3pt}){6-9} \cmidrule(l{3pt}r{3pt}){10-13} \cmidrule(l{3pt}r{3pt}){14-17}
 & Bias & RMSE & MAD & Rej. & Bias & RMSE & MAD & Rej. & Bias & RMSE & MAD & Rej. & Bias & RMSE & MAD & Rej.\\
\midrule
\addlinespace[0.3em]
\multicolumn{17}{l}{\textbf{Overall}}\\
\hspace{1em}IFE=-1 & 23.1995 & 23.2183 & 23.2519 & 1.000 & 8.1652 & 8.1823 & 8.1618 & 1.000 & 4.8728 & 4.8971 & 4.8577 & 1.000 & 4.5252 & 4.5527 & 4.5310 & 1.000\\
\hspace{1em}IFE=0 & 3.9347 & 3.9394 & 3.9299 & 1.000 & 0.7635 & 0.7753 & 0.7677 & 1.000 & 0.0811 & 0.1529 & 0.1082 & 0.131 & 0.0041 & 0.1269 & 0.0883 & 0.067\\
\hspace{1em}IFE=1 & -0.0007 & 0.0790 & 0.0519 & 0.122 & 0.0003 & 0.0795 & 0.0545 & 0.116 & 0.0918 & 3.0369 & 0.0837 & 0.050 & 0.0035 & 0.8259 & 0.0857 & 0.037\\
\hspace{1em}IFE=2 & 0.0046 & 0.9781 & 0.0984 & 0.040 & 0.2048 & 5.4894 & 0.1040 & 0.049 & 0.0887 & 1.9944 & 0.1197 & 0.027 & -0.1608 & 12.5335 & 0.1326 & 0.020\\
\addlinespace[0.3em]
\multicolumn{17}{l}{\textbf{Event Study: $\bm{e=0}$}}\\
\hspace{1em}IFE=-1 & 18.4084 & 18.4456 & 18.4850 & 1.000 & 6.1535 & 6.1751 & 6.1565 & 1.000 & 3.5824 & 3.6064 & 3.5731 & 1.000 & 3.3141 & 3.3402 & 3.3224 & 1.000\\
\hspace{1em}IFE=0 & 1.7981 & 1.8028 & 1.7950 & 1.000 & 0.3046 & 0.3149 & 0.3059 & 0.975 & 0.0322 & 0.0825 & 0.0601 & 0.101 & 0.0029 & 0.0759 & 0.0528 & 0.072\\
\hspace{1em}IFE=1 & 0.0010 & 0.0809 & 0.0563 & 0.097 & 0.0005 & 0.0812 & 0.0564 & 0.079 & 0.1006 & 3.0498 & 0.0812 & 0.044 & -0.0020 & 0.7186 & 0.0899 & 0.040\\
\hspace{1em}IFE=2 & 0.0228 & 0.9729 & 0.1057 & 0.044 & 0.1803 & 4.8735 & 0.1092 & 0.039 & 0.0856 & 1.9058 & 0.1256 & 0.020 & 0.0318 & 1.8626 & 0.1427 & 0.022\\
\addlinespace[0.3em]
\multicolumn{17}{l}{\textbf{Event Study: $\bm{e=1}$}}\\
\hspace{1em}IFE=-1 & 17.3523 & 17.4121 & 17.4418 & 1.000 & 5.9361 & 5.9696 & 5.9471 & 1.000 & 3.4808 & 3.5220 & 3.4749 & 1.000 & 3.2279 & 3.2711 & 3.2360 & 1.000\\
\hspace{1em}IFE=0 & 3.5452 & 3.5585 & 3.5587 & 1.000 & 0.6269 & 0.6453 & 0.6313 & 0.991 & 0.0670 & 0.1566 & 0.1074 & 0.127 & 0.0014 & 0.1382 & 0.0972 & 0.083\\
\hspace{1em}IFE=1 & -0.0040 & 0.0940 & 0.0666 & 0.090 & -0.0010 & 0.0970 & 0.0637 & 0.120 & -0.0361 & 0.8387 & 0.0927 & 0.041 & 0.0133 & 0.8836 & 0.0953 & 0.028\\
\hspace{1em}IFE=2 & -0.0417 & 1.4827 & 0.1336 & 0.037 & 0.1024 & 10.8848 & 0.1271 & 0.033 & 0.0159 & 3.8906 & 0.1553 & 0.028 & -0.7863 & 50.8690 & 0.1717 & 0.024\\
\addlinespace[0.3em]
\multicolumn{17}{l}{\textbf{Event Study: $\bm{e=2}$}}\\
\hspace{1em}IFE=-1 & 17.2294 & 17.2918 & 17.2922 & 1.000 & 6.2417 & 6.2803 & 6.2506 & 1.000 & 3.7858 & 3.8379 & 3.8000 & 1.000 & 3.5398 & 3.5968 & 3.5532 & 1.000\\
\hspace{1em}IFE=0 & 5.2676 & 5.2884 & 5.2709 & 1.000 & 1.0731 & 1.1034 & 1.0886 & 0.991 & 0.1146 & 0.2699 & 0.1878 & 0.110 & 0.0000 & 0.2379 & 0.1641 & 0.076\\
\hspace{1em}IFE=1 & 0.0011 & 0.1485 & 0.0978 & 0.108 & 0.0091 & 0.1624 & 0.1083 & 0.098 & 0.0188 & 1.9509 & 0.1673 & 0.028 & -0.0120 & 1.5785 & 0.1743 & 0.025\\
\bottomrule
\end{tabular}}
\begin{justify}
    \small \textit{Notes: } The table provides simulation results that vary the difference between $\E[\lambda|G=g]$ across groups (as discussed in the text) using the Staggered IFE approach proposed in the paper.  Columns labeled ``Bias'', ``RMSE'', ``MAD'', and ``Rej.'' report estimated bias, root mean squared error, median absolute deviation, and rejection rate for a test that the $ATT$ parameters are equal to 0 (which is true) at the 5\% significance level.  There are four sets of columns that differ based on the relative differences in the mean of $\lambda_i$ across groups.  The three sets of columns differ based on the treatment effect parameter that they report; the set of columns labeled ``Overall'' provides results for the overall $ATT$, while the sets of columns labeled ``Event Study'' provide results for event study parameters when the length of exposure to the treatment is 0 (i.e., the on-impact effect of the treatment) and when the length of exposure to the treatment is 1, respectively.  The rows labeled ``IFE=j'' for $j \in \{-1,0,1,2\}$ differ based on the number of interactive fixed effects included in the estimation strategy in that row.
\end{justify}
\end{table}

These results are provided in  \Cref{tab:mc-sims2}.  The results are most interesting for $l=0.01$ and $l=0.001$.  These cases essentially correspond to very small violations of parallel trends, and this is an empirically relevant Monte Carlo simulation where a researcher might be unsure about whether or not the parallel trends assumption holds.  It is especially interesting to compare the rows labeled $IFE=0$ (this estimate is essentially a GMM version of difference-in-differences where no interactive fixed effects are included in the model) and $IFE=1$ (which is the correctly specified model).    Here, the RMSE and MAD are sometimes smaller when no interactive fixed effects are included in the model.  This is especially true for $e=0$ and somewhat less true (particularly for MAD) for $e=1$ or $e=2$.  On the other hand, including too few interactive fixed effects tends to result in over-rejection.  This suggests an interesting trade-off in terms of including more or less interactive fixed effects, and it is also interesting to note that, in cases where the violation of parallel trends is small, inappropriately imposing parallel trends performs better closer to the time period when the treatment is implemented relative to periods further away from the time period when the treatment is implemented.  This makes some sense too as the violations of parallel trends here accumulate over more and more periods as units move further away from the time period in which they were treated.

\section{Conclusion} 

In this paper, we have introduced a new approach to recovering causal effect parameters in a setting where untreated potential outcomes are generated by an interactive fixed effects model,  where the researcher only has access to a few periods of data, and where there is staggered treatment adoption.  In our view, interactive fixed effects models are attractive in a large number of applications in economics where it is not clear whether or not it is reasonable to think that the effect of time-invariant unobservables is constant over time (this is evidenced by the ubiquity of pre-testing in difference-in-differences applications in economics).  However, their adoption has, at least arguably, been hampered due to either requiring a large number of time periods or relatively strong auxiliary assumptions in order to estimate.  The approach that we proposed in the current paper works without requiring a large number of periods or extra assumptions.  For the vast majority of empirical applications in economics, the only additional requirement for using our approach is that there is variation in treatment timing across different units.  This is common in many applications, and, in fact, has often been framed as a ``complication'' in many applications; instead, we argue that variation in treatment timing may present a useful opportunity to explore more complicated models in treatment effect applications with panel data.

\pagebreak
\printbibliography

\pagebreak

\appendix
\section{Theoretical Results} \label{app:proofs}

For the results below, we use the following results on matrix products below. %
For a $p\times q$ matrix $\mathbf{A}$, a $q \times r$ matrix $\mathbf{B}$, and an $r \times s$ matrix $\mathbf{C}$,  
\begin{align}
    \textrm{Rank}(\mathbf{AB}) \leq \min\{\textrm{Rank}(\mathbf{A}), \textrm{Rank}(\mathbf{B}) \} \label{eqn:matrix-rank-inequality-1}
\end{align}
and that
\begin{align} \label{eqn:matrix-rank-inequality-2}
    \textrm{Rank}(\mathbf{ABC}) \leq \min\{\textrm{Rank}(\mathbf{A}), \textrm{Rank}(\mathbf{B}), \textrm{Rank}(\mathbf{C}) \}
\end{align}
See \citet[4.15(b)]{abadir-magnus-2005} for \Cref{eqn:matrix-rank-inequality-1}, and note that \Cref{eqn:matrix-rank-inequality-2} follows from iterating \Cref{eqn:matrix-rank-inequality-1}.  
If, in addition, $q=r$ and $\textrm{Rank}(\mathbf{B}) = q$ (so that $\mathbf{B}$ is a square $q\times q$ matrix with full rank and $\mathbf{AB}$ is a $p \times q$ matrix), then
\begin{align} \label{eqn:matrix-rank-equality-2}
    \textrm{Rank}(\mathbf{AB}) = \textrm{Rank}(\mathbf{A})
\end{align}
See \citet[4.24(b)]{abadir-magnus-2005} for a proof.

\begin{proof}[\textbf{Proof of \Cref{prop:ranks}}]

To start with, we define some additional notation.  Let
\begin{align*}
    \widetilde{\mathbf{\Gamma}}(g,t):= \E\Big[\Big({\widetilde{\Delta Y}}^{{pre(g)}^{'}} - \E\big[{\widetilde{\Delta Y}}^{{pre(g)}^{'}}|G\in \mathcal{G}^{comp}(g,t) \big]\Big) \Big|G=g' \Big]_{g' \in \mathcal{G}^{comp}(g,t)}
\end{align*}
and, analogously,
\begin{align*}
    \widetilde{\mathbf{\Lambda}}(g,t):= \E\Big[\Big(\lambda - \E\big[\lambda|G\in \mathcal{G}^{comp}(g,t) \big]\Big) \Big|G=g' \Big]_{g' \in \mathcal{G}^{comp}(g,t)}
\end{align*}
which are demeaned versions of $\mathbf{\Gamma}(g,t)$ and $\mathbf{\Lambda}(g,t)$.  Note that, since $\mathbf{\Gamma}(g,t)$ and $\mathbf{\Lambda}(g,t)$ both include an intercept, we have that $\textrm{Rank}(\mathbf{\Gamma}(g,t)) = \textrm{Rank}(\tilde{\mathbf{\Gamma}}(g,t)) + 1$, and, likewise that $\textrm{Rank}(\mathbf{\Lambda}(g,t)) = \textrm{Rank}(\tilde{\mathbf{\Lambda}}(g,t)) + 1$.  

Notice that, from the definition of $\ell^{comp}(g,t)$ and \Cref{ass:sel}, we have that
\begin{align}
    \widetilde{\mathbf{\Gamma}}(g,t) &= \E \Big[\Big(\mathbf{\Omega}'\bmd \mathbf{F}^{pre(g)}\big(\lambda- \E\big[\lambda|G\in \mathcal{G}^{comp}(g,t) \big]\big)\Big)'\Big|G=g'\Big]_{g' \in \mathcal{G}^{comp}(g,t)} \nonumber \\
    &= \tilde{\mathbf{\Lambda}}(g,t) {\bmd \mathbf{F}}^{{pre(g)}^{'}} \mathbf{\Omega} \label{eqn:rank-tilde-gamma}
\end{align}

To prove the first part of the proposition (i.e., the case where \Cref{ass:rank-delta-F,ass:rank-Lambda} both hold), notice  that $\mathrm{Rank}\big({\bmd \mathbf{F}}^{{pre(g)}^{'}} \mathbf{\Omega}\big)=\textrm{Rank}\big( \mathbf{\Omega}'\bmd \mathbf{F}^{pre(g)} \big) = \mathrm{Rank}\big({\bmd \mathbf{F}}^{pre(g)}\big)=R$ (where the last equality holds by \Cref{ass:rank-delta-F}).  Then, from \Cref{eqn:matrix-rank-equality-2}, we have that $\textrm{Rank}(\tilde{\mathbf{\Gamma}}(g,t)) = \textrm{Rank}(\tilde{\mathbf{\Lambda}}(g,t)) = R$ (where the last equality holds by \Cref{ass:rank-Lambda}).  %

For the second part of the proposition (where at least one of Assumptions \ref{ass:rank-delta-F} or \ref{ass:rank-Lambda} does not hold), 
\begin{align*}
    \textrm{Rank}(\tilde{\mathbf{\Gamma}}(g,t)) \leq \min\{ \textrm{Rank}(\tilde{\mathbf{\Lambda}}(g,t)), \textrm{Rank}(\bmd \mathbf{F}^{{pre(g)}^{'}}) \} < R
\end{align*}
where the first inequality holds from \Cref{eqn:matrix-rank-inequality-2} (and because $\mathbf{\Omega}$ has rank $R$), and the second inequality holds because at least one of Assumptions \ref{ass:rank-delta-F} or \ref{ass:rank-Lambda} does not hold in this case.

\end{proof}

\begin{proof}[\textbf{Proof of \Cref{thm:attgt-identification}}]
    Starting from \Cref{eqn:mom-conds} in the main text, which we derived under \Cref{ass:staggered,ass:no-anticipation,ass:sampling,ass:ife,ass:sel}, we have that
    \begin{align*}
         \mathbf{\Gamma}(g,t) \begin{pmatrix} \theta^*(g,t) \\ F^*(g,t) \end{pmatrix} = \E\Big[ \ell^{comp}(g,t)(Y_t - Y_{g-1}) \Big]
    \end{align*}
    which implies that
    \begin{align*}
        \mathbf{\Gamma}(g,t)'\mathbf{W}(g,t) \mathbf{\Gamma}(g,t) \begin{pmatrix} \theta^*(g,t) \\ F^*(g,t) \end{pmatrix} = \mathbf{\Gamma}(g,t)'\mathbf{W}(g,t) \E\Big[ \ell^{comp(g,)}(Y_t - Y_{g-1}) \Big]
    \end{align*}
    which implies that
    \begin{align*}
        \begin{pmatrix}
        \theta^*(g,t) \\ F^*(g,t) 
        \end{pmatrix} = \left( \mathbf{\Gamma}(g,t)' \mathbf{W}(g,t) \mathbf{\Gamma}(g,t) \right)^{-1} \mathbf{\Gamma}(g,t)' \mathbf{W}(g,t) \E[\ell^{comp}(g,t)(Y_{t} - Y_{g-1})]
    \end{align*}
    where this equation uses \Cref{prop:ranks}(thanks to \Cref{ass:rank-Lambda,ass:rank-delta-F}) and \Cref{ass:BndConst_W}.  Given this result, the second part of the theorem holds directly from \Cref{eqn:attgt-identified}.   This completes the proof.
\end{proof}

\bigskip

Next, we move to proving \Cref{Theorem:Asymp_Normal}.  A useful first step is to obtain an influence function representation of $\widehat{\delta}^*(g,t)$ which we provide in the following lemma.

\begin{lemma}\label{Lemma_deltaInfluence} Under \Cref{ass:staggered,ass:no-anticipation,ass:sampling,ass:ife,ass:sel,ass:rank-delta-F,ass:rank-Lambda,ass:BndConst_W}, $\widehat{\delta}^*(g,t)$ has the following influence function representation:
\[\sqrt{n}(\widehat{\delta}^*(g,t) - \delta^*(g,t)) = \mathbf{B}(g,t)\frac{1}{\sqrt{n}}\sum_{i=1}^{n}\ell_i^{comp}(g,t)'v_i(g,t) + o_p(1).\]
\end{lemma}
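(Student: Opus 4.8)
The plan is to treat $\widehat{\delta}^*(g,t)$ as a linear (weighted) GMM estimator and to exploit the \emph{exact} linearity of the estimating equation \Cref{eqn:mp-estimating-equation}, so that the first-stage sampling error in $\widehat{\mathbf{\Gamma}}(g,t)$ and $\widehat{\mathbf{W}}(g,t)$ does not contribute to the leading term. First I would record the key algebraic identity. The weight vector $\ell_i^{comp}(g,t)$ is nonzero only for units with $G_i \in \mathcal{G}^{comp}(g,t)$, and for exactly those units \Cref{ass:no-anticipation} gives $Y_{it}-Y_{ig-1} = Y_{it}(0)-Y_{ig-1}(0)$, so \Cref{eqn:mp-estimating-equation} yields
\[
    Y_{it}-Y_{ig-1} = \begin{pmatrix}1 & \widetilde{\Delta Y}_i^{pre(g)}\end{pmatrix}\delta^*(g,t) + v_i(g,t).
\]
Multiplying by $\ell_i^{comp}(g,t)$, averaging, and noting that units outside $\mathcal{G}^{comp}(g,t)$ contribute zero to both sides, I obtain
\[
    \E_n[\ell_i^{comp}(g,t)(Y_{it}-Y_{ig-1})] = \widehat{\mathbf{\Gamma}}(g,t)\,\delta^*(g,t) + \E_n[\ell_i^{comp}(g,t)\,v_i(g,t)].
\]

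Next I would substitute this decomposition into the estimator \eqref{eqn:Estimator_delta}. Since $(\widehat{\mathbf{\Gamma}}'\widehat{\mathbf{W}}\widehat{\mathbf{\Gamma}})^{-1}\widehat{\mathbf{\Gamma}}'\widehat{\mathbf{W}}\widehat{\mathbf{\Gamma}} = \mathbf{I}_{R+1}$, the $\widehat{\mathbf{\Gamma}}\delta^*$ piece reproduces $\delta^*(g,t)$ exactly, leaving
\[
    \sqrt{n}\bigl(\widehat{\delta}^*(g,t)-\delta^*(g,t)\bigr) = \widehat{\mathbf{B}}(g,t)\,\frac{1}{\sqrt{n}}\sum_{i=1}^n \ell_i^{comp}(g,t)\,v_i(g,t),
\]
where $\widehat{\mathbf{B}}(g,t) := (\widehat{\mathbf{\Gamma}}'\widehat{\mathbf{W}}\widehat{\mathbf{\Gamma}})^{-1}\widehat{\mathbf{\Gamma}}'\widehat{\mathbf{W}}$ is the sample analogue of $\mathbf{B}(g,t)$. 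The remaining work is to replace $\widehat{\mathbf{B}}(g,t)$ with $\mathbf{B}(g,t)$ through a Slutsky argument. This requires three ingredients: (i) $\widehat{\mathbf{\Gamma}}(g,t)\xrightarrow{p}\mathbf{\Gamma}(g,t)$ by the law of large numbers, valid because each entry is a sample average of a bounded indicator ($\ell^{comp}$) times a linear function of $\mathbf{Y}$, which has a finite first moment under \Cref{ass:BndConst_W}; (ii) $\widehat{\mathbf{W}}(g,t)\xrightarrow{p}\mathbf{W}(g,t)$ positive definite by \Cref{ass:BndConst_W}; and (iii) invertibility of $\mathbf{\Gamma}(g,t)'\mathbf{W}(g,t)\mathbf{\Gamma}(g,t)$, which follows from $\textrm{Rank}(\mathbf{\Gamma}(g,t)) = R+1$ (full column rank) established in \Cref{prop:ranks} under \Cref{ass:rank-delta-F,ass:rank-Lambda}, together with the positive definiteness of $\mathbf{W}(g,t)$. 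By continuity of matrix inversion and multiplication at an invertible point, these give $\widehat{\mathbf{B}}(g,t)\xrightarrow{p}\mathbf{B}(g,t)$.

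To close, I would verify that the average multiplying $\widehat{\mathbf{B}}$ is $O_p(1)$. The moment condition $\E[\ell_i^{comp}(g,t)\,v_i(g,t)] = 0$ holds under \Cref{ass:sel}, and $\E\|\ell_i^{comp}(g,t)\,v_i(g,t)\|^2 < \infty$ because $\ell^{comp}$ is bounded and $v_i(g,t)$ is a linear combination of outcome differences with finite second moment under \Cref{ass:BndConst_W}; the iid assumption \Cref{ass:sampling} then delivers a central limit theorem, so $\frac{1}{\sqrt{n}}\sum_i \ell_i^{comp}(g,t)\,v_i(g,t) = O_p(1)$. Writing $\widehat{\mathbf{B}} = \mathbf{B} + o_p(1)$ and multiplying gives $(\widehat{\mathbf{B}}-\mathbf{B})\cdot O_p(1) = o_p(1)$, hence
\[
    \sqrt{n}\bigl(\widehat{\delta}^*(g,t)-\delta^*(g,t)\bigr) = \mathbf{B}(g,t)\,\frac{1}{\sqrt{n}}\sum_{i=1}^n \ell_i^{comp}(g,t)\,v_i(g,t) + o_p(1),
\]
which is the claimed representation.

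I do not expect a deep obstacle, since this is a standard linear-GMM expansion; the one point that deserves care is recognizing that the estimation error in $\widehat{\mathbf{\Gamma}}(g,t)$ generates \emph{no} separate influence-function term. This is precisely because the estimating equation is exactly linear in $\delta^*(g,t)$, so the sample moment evaluated at the truth equals $\E_n[\ell_i^{comp}(g,t)\,v_i(g,t)]$ identically; the first-stage error enters only through $\widehat{\mathbf{B}}(g,t)$ multiplying an already mean-zero, $O_p(1)$ average and is therefore absorbed into the $o_p(1)$ remainder. The accompanying bookkeeping is simply to confirm the moment and rank conditions (finite second moments from \Cref{ass:BndConst_W}, full column rank of $\mathbf{\Gamma}(g,t)$ from \Cref{prop:ranks}) that license the law of large numbers, the central limit theorem, and the invertibility used in the Slutsky step.
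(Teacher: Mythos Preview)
Your proposal is correct and follows essentially the same route as the paper: substitute the linear representation \eqref{eqn:mp-estimating-equation} into the estimator to obtain the exact identity $\widehat{\delta}^*(g,t)-\delta^*(g,t)=\widehat{\mathbf{B}}(g,t)\,\E_n[\ell_i^{comp}(g,t)\,v_i(g,t)]$, then replace $\widehat{\mathbf{B}}(g,t)$ by $\mathbf{B}(g,t)$ via a continuous-mapping/Slutsky argument (using \Cref{prop:ranks} for full column rank of $\mathbf{\Gamma}(g,t)$ and \Cref{ass:BndConst_W} for $\widehat{\mathbf{W}}\xrightarrow{p}\mathbf{W}$), and bound the normalized sum by $O_p(1)$. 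Your write-up is somewhat more explicit than the paper's in spelling out why the first-stage error in $\widehat{\mathbf{\Gamma}}(g,t)$ contributes no separate influence term, but the argument is the same.
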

\begin{proof}
    Substituting the expression \eqref{eqn:mp-estimating-equation} into \eqref{eqn:Estimator_delta} gives
    \begin{align*}
      \widehat{\delta}^*(g,t) - \delta^*(g,t) = \widehat{\mathbf{B}}(g,t)\E_n[\ell_i^{comp}(g,t)'v_i(g,t)]
    \end{align*}where $\widehat{\mathbf{B}}(g,t):= \left( \widehat{\mathbf{\Gamma}}(g,t)'\widehat{\mathbf{W}}(g,t)\widehat{\mathbf{\Gamma}}(g,t) \right)^{-1} \widehat{\mathbf{\Gamma}}(g,t)'\widehat{\mathbf{W}}(g,t)$. Under \Cref{ass:rank-delta-F,ass:rank-Lambda}, $\mathbf{\Gamma}(g,t)$ has full column rank (see \Cref{prop:ranks}), thus in addition to \Cref{ass:BndConst_W}, $\left( \mathbf{\Gamma}(g,t)' \mathbf{W}(g,t) \mathbf{\Gamma}(g,t) \right)$ is non-singular. Under \Cref{ass:sampling,ass:BndConst_W,ass:sel,ass:ife,ass:rank-delta-F,ass:rank-Lambda}, $\widehat{\mathbf{B}}(g,t) = \mathbf{B}(g,t) + o_p(1) $ by the continuity of the inverse at a non-singular matrix, the continuity of the product of matrices, and the continuous mapping theorem. The stochastic boundedness of $||\ell^{comp}(g,t)||^4$ holds by construction as $\ell^{comp}(g,t) \in \mathbb{R}^{|\mathcal{G}^{comp}(g,t)|} $ comprises variables that are bounded. $\E_n[\ell_i^{comp}(g,t)'v_i(g,t)] = O_p(n^{-1/2})$ under \Cref{ass:sampling,ass:BndConst_W,ass:sel,ass:ife} and Markov's inequality. This proves the assertion as claimed.
\end{proof}

\begin{proof}[\textbf{Proof of \Cref{Theorem:Asymp_Normal}}]
    Consider the following decomposition:
    \begin{align*}
    \sqrt{n}\Big(\widehat{ATT}(g,t) - ATT(g,t)\Big) 
    &= \sqrt{n}\left\{\hat{p}_g^{-1}\Big(\E_n\Big[\indicator{G_i=g}(Y_{it} - Y_{ig-1})\Big] - \E_n\Big[A_i(g)\Big]'\widehat{\delta}^*(g,t)\Big) \right. \\
    &\hspace{35pt} \left. - p_g^{-1}\Big(\E\Big[\indicator{G=g}(Y_{t} - Y_{g-1})\Big] - \E\Big[A_i(g)\Big]'\delta^*(g,t)\Big) \right\}\\[5pt]
    &= \hat{p}_g^{-1}\sqrt{n} \left(\E_n\Big[\indicator{G_i=g}(Y_{it} - Y_{ig-1})\Big] - \E\Big[\indicator{G_i=g}(Y_{it} - Y_{ig-1})\Big] \right) \\
    &\hspace{15pt} - \hat{p}_g^{-1} \sqrt{n} \left( \E_n\Big[A_i(g)\Big]'\widehat{\delta}^*(g,t) - \E\Big[A_i(g)\Big]'\delta^*(g,t) \right) \\
    &\hspace{15pt} + \left(\E\Big[\indicator{G=g}(Y_{t} - Y_{g-1})\Big] - \E\Big[A_i(g)\Big]'\delta^*(g,t) \right)\sqrt{n}\Big( \hat{p}_g^{-1} - p_g^{-1}\Big) \\[5pt]
    &= \hat{p}_g^{-1}\sqrt{n} \left(\E_n\Big[\indicator{G_i=g}(Y_{it} - Y_{ig-1})\Big] - \E\Big[\indicator{G_i=g}(Y_{it} - Y_{ig-1})\Big] \right) \\
    &\hspace{15pt} - \hat{p}_g^{-1} \E_n\Big[A_i(g)\Big]'\sqrt{n}\Big(\widehat{\delta}^*(g,t) - \delta^*(g,t) \Big)\\
    &\hspace{15pt} - \hat{p}_g^{-1} \sqrt{n} \left( \E_n\Big[A_i(g)\Big] - \E\Big[A_i(g)\Big] \right)' \delta^*(g,t) \\
    &\hspace{15pt} - \hat{p}_g^{-1} ATT(g,t)\sqrt{n}\Big( \hat{p}_g - p_g\Big) \\[5pt]
    &= \frac{1}{\sqrt{n}} \sum_{i=1}^n \Big( \psi_{igt}^{(1)} + \psi_{igt}^{(2)} + \psi_{igt}^{(3)} + \psi_{igt}^{(4)} \Big) + o_p(1) \\[5pt]
    &= \frac{1}{\sqrt{n}} \sum_{i=1}^n \psi_{igt} + o_p(1) 
\end{align*} 
where the first equality holds from the expressions for $\widehat{ATT}(g,t)$ and $ATT(g,t)$ in \Cref{eqn:attgt-est,eqn:attgt-hat}, the second equality holds by adding and subtracting terms, the third equality adds and subtracts more terms (for the term from the second line of the previous equality) and cross-multiplies and uses the definition of $ATT(g,t)$ (for the term from the third line of the previous equality), and the fourth equality holds by the definitions of $\psi_{igt}(j)$ for $j=1,2,3,4$ and uses \Cref{Lemma_deltaInfluence} for $\psi_{igt}^{(2)}$. Part (ii) follows from \Cref{Lemma_deltaInfluence} and \Cref{ass:sel,ass:ife} (which together imply \Cref{eqn:Eu} whence $\E[\psi_{igt}^{(2)}]=0$) as the above implies $\widehat{ATT}(g,t)$ converges in probability to $ ATT(g,t)$ at the $\sqrt{n}$-rate. For $(g,t)$-fixed, observe that $\psi_{igt}$ are $iid$ across $i=1,\dots,n$ under \Cref{ass:sampling}. The conclusion of part (iii) follows from the Lindberg-L\'evy Central Limit Theorem (CLT).
\end{proof}

\bigskip

\begin{proof}[\textbf{Proof of \Cref{Corollary:ATT}}]
    The asymptotic linear representation simply follows from \Cref{Theorem:Asymp_Normal} by vertically stacking the elements of the set $\{\psi_{igt}\}_{(g,t)\in\mathcal{G}^\dagger \times \{g, \ldots, t^{max}(g)\}}$ into the vector $\Psi_i$. Observing that under \Cref{ass:sampling}, the summands $\Psi_i$ are $iid$, the second part follows from the multivariate Lindberg-L\'evy CLT. Lastly, consider the following decomposition.
    \begin{align*}
        \sqrt{n}(\hat{w}'\widehat{ATT} - w'ATT) &= \sqrt{n}w'(\widehat{ATT} - ATT) + \sqrt{n}ATT'(\hat{w}-w) + \sqrt{n}(\hat{w}-w)'(\widehat{ATT} - ATT)\\
        &= \frac{1}{\sqrt{n}}\sum_{i=1}^n\Big(w'\Psi_i + ATT'\mathcal{W}_i\Big) + o_p(1)
    \end{align*}
    \noindent The second equality follows from $\sqrt{n}$-consistency of $\hat{w}$. As the summands $w'\Psi_i + ATT'\mathcal{W}_i$ are $iid$ and mean-zero, the conclusion follows from the CLT.
\end{proof}

\section{Additional Details about Rank Conditions} \label{app:rank-explanation}

In this section, we show that the interactive fixed effects model in \Cref{ass:ife} reduces to one with fewer interactive fixed effects if either $\textrm{Rank}(\bmd \mathbf{F}) < R$ or $\textrm{Rank}(\mathbf{\Lambda}) < R+1$.
Recall that 
\begin{align*}
    \Delta Y_i(0) &= \Delta \theta + \bmd \mathbf{F} \lambda_i + \Delta e_i
\end{align*}

First, consider the case where $\textrm{Rank}(\mathbf{\Lambda}) = (R+1)$, but where $\textrm{Rank}(\bmd \mathbf{F}) = R-1$.  This means that one of the columns of $\bmd \mathbf{F}$ can be written as a linear combination of the other columns of $\bmd \mathbf{F}$.  Without loss of generality, suppose that it is the last column.  Partition $\bmd \mathbf{F} = \begin{bmatrix}\bmd \mathbf{F}_{,1:(R-1)} & \bmd \mathbf{F}_{,R} \end{bmatrix}$ where $\bmd \mathbf{F}_{,1:(R-1)}$ is a $(\mathcal{T}-1)\times R$ matrix that contains the first $(R-1)$ columns of $\bmd \mathbf{F}$ and $\bmd \mathbf{F}_{,R}$ is an $(\mathcal{T}-1)$ dimensional vector that contains the last column of $\bmd \mathbf{F}$. There exists an  $(R-1)$ dimensional vector $c$ such that $\bmd \mathbf{F}_{,R} = \bmd \mathbf{F}_{,1:(R-1)} c$.  Thus, we can re-write
\begin{align*}
    \Delta Y_i(0) &= \Delta \theta + \begin{bmatrix} \bmd \mathbf{F}_{,1:(R-1)} & \bmd \mathbf{F}_{,R} \end{bmatrix} \lambda_i + \Delta e_i \\
    &= \Delta \theta + \begin{bmatrix} \bmd \mathbf{F}_{,1:(R-1)} & \bmd \mathbf{F}_{,1:(R-1)} c \end{bmatrix} \lambda_i + \Delta e_i \\
    &= \Delta \theta + \bmd \mathbf{F}_{,1:(R-1)} \lambda_{i,1:(R-1)} + \bmd \mathbf{F}_{,1:(R-1)} c \lambda_{i,R} + \Delta e_i \\
    &= \Delta \theta + \bmd \mathbf{F}_{,1:(R-1)} \Big( \lambda_{i,1:(R-1)} + c \lambda_{i,R}\Big) + \Delta e_i
\end{align*}
This is an interactive fixed effects model with $(R-1)$ factors, as claimed.  Before moving on, it is worth providing a bit more intuition about the differences between cases where $\textrm{Rank}(\bmd \mathbf{F}) = (R-1)$ relative to the case considered in the main text where $\textrm{Rank}(\bmd \mathbf{F}) = R$.  Relative to the main case, the case considered here amounts to there being less independent variation in the factors across time (to be clear, we are maintaining that $(\mathcal{T}-1) \geq R$ so that there are enough available periods).  What we have shown is that, in this case, the model reduces to one in which there are fewer factors (as we mentioned in the main text).

Now, consider the case where $\textrm{Rank}(\bmd \mathbf{F}) = R$, but where $\textrm{Rank}(\mathbf{\Lambda}) = R$.  Notice that we can write
\begin{align*}
    \Delta Y_{i}(0) = \Delta \theta + \bmd \mathbf{F} \E[\lambda | G] + (\Delta e_i + \bmd \mathbf{F} u_i)
\end{align*}
where $u_{i} = \lambda_i - \E[\lambda | G]$ which is an $R$ dimensional vector that is mean independent of $G$.  In this case, one of the columns of $\mathbf{\Lambda}$ can be written as a linear combination of the other columns of $\mathbf{\Lambda}$.  Without loss of generality, suppose that the last column of $\mathbf{\Lambda}$, and partition $\mathbf{\Lambda} = \begin{bmatrix}\mathbf{\Lambda}_{,1:(R-1)} & \mathbf{\Lambda}_{,R}\end{bmatrix}$ where
\begin{align*}
    \mathbf{\Lambda}_{,1:(R-1)} = \begin{bmatrix} 1 & \E[\lambda_1 | G=g'] & \cdots & \E[\lambda_{R-1} | G=g'] \end{bmatrix}_{g' \in \mathcal{G}} \qquad \mathbf{\Lambda}_{,R} = \begin{bmatrix} \E[\lambda_R | G=g']\end{bmatrix}_{g' \in \mathcal{G}}    
\end{align*}
which are a $|\mathcal{G}| \times R$ matrix and $|\mathcal{G}|$ dimensional vector respectively.  In the case considered here, there exists an $R$ dimensional vector $c$ such that we can write $\mathbf{\Lambda}_{,R} = \mathbf{\Lambda}_{,1:(R-1)} c$.  To make progress along these lines, we need to introduce some more notation.  Define the elements of $c$ according to $c=(c_0, c_1, \ldots, c_{R-1})'$.  Also define $c_{\neg 0} = c \setminus c_0$ (i.e., all of the elements of $c$ except for $c_0$).  Similarly, define $u_{i\neg R} = u_i \setminus u_{iR}$ (i.e., all of the elements of $u_i$ except the last one).

Then, we have that
\begin{align*}
    \Delta Y_i(0) &= \Delta \theta + \begin{bmatrix} \bmd \mathbf{F}_{,1:(R-1)} & \bmd \mathbf{F}_{,R} \end{bmatrix}\begin{pmatrix} \E[\lambda_1 | G] \\ \vdots \\ \E[\lambda_{R-1} | G] \\ \E[\lambda_R | G] \end{pmatrix} + \left(\Delta e_i + \begin{bmatrix} \bmd \mathbf{F}_{,1:(R-1)} & \bmd \mathbf{F}_{,R}\end{bmatrix} u_i\right) \\
    &= \Delta \theta + \begin{bmatrix} \bmd \mathbf{F}_{,1:(R-1)} & \bmd \mathbf{F}_{,R} \end{bmatrix}\begin{pmatrix} \E[\lambda_1 | G] \\ \vdots \\ \E[\lambda_{R-1} | G] \\ c_0 + c_1 \E[\lambda_1 | G] + \cdots c_{R-1} \E[\lambda_{R-1} | G] \end{pmatrix} + \left(\Delta e_i + \begin{bmatrix} \bmd \mathbf{F}_{,1:(R-1)} & \bmd \mathbf{F}_{,R}\end{bmatrix} u_i\right) \\
    &= \Delta \theta + \begin{bmatrix} \bmd \mathbf{F}_{,1:(R-1)} & \bmd \mathbf{F}_{,R}\end{bmatrix} \begin{pmatrix} \lambda_{i1} - u_{i1} \\ \vdots \\ \lambda_{iR-1} - u_{iR-1} \\ c_0 + c_1 (\lambda_{i1} - u_{i1}) + \cdots + c_{R-1}( \lambda_{iR-1} - u_{iR-1}) \end{pmatrix} \\
    & \hspace{10pt} + \left(\Delta e_i + \begin{bmatrix} \bmd \mathbf{F}_{,1:(R-1)} & \bmd \mathbf{F}_{,R}\end{bmatrix} u_i\right) \\ 
    &= \Delta \theta + \bmd \mathbf{F}_{,R} c_0 + \bmd \mathbf{F}_{,1:(R-1)} \begin{pmatrix} \lambda_1 \\ \vdots \\ \lambda_{R-1} \end{pmatrix} + \bmd \mathbf{F}_{,R} c_1 \lambda_1 + \cdots + \bmd \mathbf{F}_{,R} c_{R-1} \lambda_{R-1} \\
    & \hspace{10pt} - \bmd \mathbf{F}_{,1:(R-1)} u_{i\neg R} - \bmd \mathbf{F}_{,R} c_1 u_{i1} - \cdots - \bmd \mathbf{F}_{,R} c_{R-1} u_{iR-1} 
    + \left(\Delta e_i + \begin{bmatrix} \bmd \mathbf{F}_{,1:(R-1)} & \bmd \mathbf{F}_{,R}\end{bmatrix} \begin{pmatrix} u_{i\neg R} \\ u_{iR} \end{pmatrix}\right) \\
    &= \underbrace{\Big(\Delta \theta + \bmd \mathbf{F}_{,R} c_0\Big)}_{\small \textrm{time fixed effects}} + \underbrace{\begin{bmatrix} \bmd \mathbf{F}_{,1:(R-1)} + \bmd \mathbf{F}_{,R} c_{\neg 0}' \end{bmatrix}}_{\small \textrm{factors}} \begin{pmatrix} \lambda_1 \\ \vdots \\ \lambda_{R-1} \end{pmatrix} + \underbrace{\left\{ \Delta e_i + \bmd \mathbf{F}_{,R} \Big( u_{iR} -  c_{\neg 0}' u_{i\neg R}\Big) \right\}}_{\small \textrm{idiosyncratic error}}
\end{align*}
where the first equality holds from the interactive fixed effects model in \Cref{ass:ife}, by the definition of $u_i$ above, and by partitioning $\bmd \mathbf{F}$ in the same way as above; the second equality holds given the reduced rank of $\mathbf{\Lambda}$ discussed above; the third equality holds by the definition of $u_i$; the fourth equality holds by combining terms; the fifth equality also holds by combining terms, re-arranging, and canceling.  In the last line, the idiosyncratic error term is mean independent of the groups.  This discussion implies that, when $\textrm{Rank}(\mathbf{\Lambda}) = R$ (rather than $R+1$), the interactive fixed effects model reduces from a model with $R$ interactive fixed effects to one with $(R-1)$ interactive fixed effects.  

The above discussion is rather technical, and it is worth discussing what reduced rank of $\mathbf{\Lambda}$ means in a particular example.  Consider the baseline case discussed in \Cref{SubSect:Baseline} except suppose that $R=2$.  In this case, 
\begin{align*}
    \mathbf{\Lambda} = \begin{bmatrix} 1 & \E[\lambda_1 | G=3] & \E[\lambda_2 | G=3] \\
    1 & \E[\lambda_1 | G=4] & \E[\lambda_2 | G=4] \\
    1 & \E[\lambda_1 | G=\infty] & \E[\lambda_2 | G=\infty] \end{bmatrix}
\end{align*}
The rank would be reduced if it were the case that $\E[\lambda_1 | G=g] = \kappa_1 \E[\lambda_2 | G=g]$ for all groups, for some constant $\kappa_1$.  Inasmuch as $\lambda_1$ and $\lambda_2$ are unobserved heterogeneity, this is a rather strange case.  The rank can also be reduced if, for example, 
\begin{align*}
    \begin{bmatrix} \E[\lambda_1 | G=3] \\ \E[\lambda_2 | G=3] \end{bmatrix} =  \begin{bmatrix} \E[\lambda_1 | G=4] \\ \E[\lambda_2 | G=4] \end{bmatrix}    
\end{align*} 
i.e., the mean of $\lambda_1$ and $\lambda_2$ is the same for groups 3 and 4.  This is a realistic possibility in applications. $\E[\lambda_1|G=\infty]$ and $\E[\lambda_2|G=\infty]$ are unrestricted, but we can write $\E[\lambda_1|G=\infty] = \kappa_2 \E[\lambda_2|G=\infty]$ for some unknown constant $\kappa_2$.  It is useful for the argument below to note that  $(\E[\lambda_1|G] - \E[\lambda_1|G=3]) = \kappa_2(\E[\lambda_2|G] - \E[\lambda_1|G=3)$.  This holds because, for $G=3$ or $G=4$, both sides are equal to 0.  For $G=\infty$ the equality holds because $\E[\lambda_1|G=\infty] = \kappa_2 \E[\lambda_2|G=\infty]$.

In this case, following a simplified version of the argument presented above, we have that
\begin{align*}
    \Delta Y_i(0) &= \Delta \theta + \bmd \mathbf{F}_1 \lambda_{i1} + \bmd \mathbf{F}_2 \lambda_{i2} + \Delta e_i \\
    &= \Delta \theta + \bmd \mathbf{F}_1 \E[\lambda_1 | G] + \bmd \mathbf{F}_2 \E[\lambda_2 | G] + \Big( \Delta e_i + \bmd \mathbf{F}_1 u_{i1} + \bmd \mathbf{F}_2 u_{i2} \Big) \\
    &= \Delta \theta + \bmd \mathbf{F}_1 \E[\lambda_1|G=3] + \bmd \mathbf{F}_2 \E[\lambda_2|G=3] \\
    &\hspace{10pt} + \bmd \mathbf{F}_1 \Big(\E[\lambda_1 | G]-\E[\lambda_1|G=3]\Big) + \bmd \mathbf{F}_2 \Big(\E[\lambda_2 | G] - \E[\lambda_2 | G=3]\Big) \\
    &\hspace{10pt} + \Big( \Delta e_i + \bmd \mathbf{F}_1 u_{i1} + \bmd \mathbf{F}_2 u_{i2} \Big) \\
    &= \Delta \theta + \bmd \mathbf{F}_1 \E[\lambda_1|G=3] + \bmd \mathbf{F}_2 \E[\lambda_2|G=3] \\
    &\hspace{10pt} + \bmd \mathbf{F}_1 \kappa_2 \Big(\E[\lambda_2 | G]-\E[\lambda_2|G=3]\Big) + \bmd \mathbf{F}_2 \Big(\E[\lambda_2 | G] - \E[\lambda_2 | G=3]\Big) \\
    &\hspace{10pt} + \Big( \Delta e_i + \bmd \mathbf{F}_1 u_{i1} + \bmd \mathbf{F}_2 u_{i2} \Big) \\
    &= \Delta \theta + \bmd \mathbf{F}_1 \E[\lambda_1|G=3] - \bmd \mathbf{F}_1 \kappa_2 \E[\lambda_2 | G=3]\\
    &\hspace{10pt} + \Big(\bmd \mathbf{F}_1 \kappa_2 + \bmd \mathbf{F}_2 \Big) \E[\lambda_2 | G]  \\
    &\hspace{10pt} + \Big( \Delta e_i + \bmd \mathbf{F}_1 u_{i1} + \bmd \mathbf{F}_2 u_{i2} \Big) \\
    &= \underbrace{\Delta \theta + \bmd \mathbf{F}_1 \E[\lambda_1|G=3] - \bmd \mathbf{F}_1 \kappa_2 \E[\lambda_2 | G=3]}_{\small \textrm{time fixed effects}} \\
    &\hspace{10pt} + \underbrace{\Big(\bmd \mathbf{F}_1 \kappa_2 + \bmd \mathbf{F}_2 \Big)}_{\small \textrm{factor}} \lambda_{i2}  \\
    &\hspace{10pt} + \underbrace{\Big( \Delta e_i + \bmd \mathbf{F}_1 u_{i1} - \bmd \mathbf{F}_{1} \kappa_2 u_{i2} \Big)}_{\small \textrm{idiosyncratic error term}}
\end{align*}
where the first equality comes from the definition of the interactive fixed effects model with two interactive fixed effects, the second equality uses the definition of $u_{ij}$ for $j=1,2$, the third equality adds and subtracts $\bmd \mathbf{F}_j \E[\lambda_j|G=3]$ for $j=1,2$, the fourth equality replaces $(\E[\lambda_1|G] - \E[\lambda_1|G=3])$ using the argument in the preceding paragraph, the fifth equality re-arranges and cancels terms, and the last equality uses the definition of $u_{i2}$ again to substitute for $\E[\lambda_2|G]$ and then re-arranges and cancels terms.  This shows that, in the case considered here, the interactive fixed effects model reduces to one with only one interactive fixed effect.

\end{document}